\documentclass[12pt,3p,number]{elsarticle}

\usepackage{amssymb}
\usepackage{graphicx}
\usepackage{url}
\usepackage{floatrow}
\usepackage{placeins}

\usepackage{xspace}
\usepackage{tikz}
\usepackage[noend,boxed,linesnumbered]{algorithm2e}
\usetikzlibrary{arrows,snakes,shapes}
\tikzset{node/.style={draw,circle,fill=none,red,gray,thick,inner sep = 1.6pt}}

\newcommand{\G}{{\cal{G}}}
\newcommand{\IG}[2]{{\G^{#1}[#2]}} 
\newcommand{\llad}[2]{{\mathcal{L}^{#1}[#2]}} 
\newcommand{\rlad}[2]{{\mathcal{R}^{#1}[#2]}} 
\newcommand{\tinter}{{\scshape $T$-Interval-Connectivity}\xspace}
\newcommand{\inter}{{\scshape Interval-Connectivity}\xspace}

\newcommand{\tstable}{{\scshape $T$-Stability}\xspace}
\newcommand{\stable}{{\scshape Stability}\xspace}

\newcommand{\boldpara}[1]{\par\noindent{{\bf {#1}}\,}}

\definecolor{ao}{rgb}{0.0, 0.5, 0.0}

\newcommand{\joe}[1]{#1}
\newcommand{\ir}{j}
\newcommand{\review}[1]{#1}

\newcommand{\il}{j-1}

\newtheorem{theorem}{Theorem}
\newtheorem{lemma}{Lemma}

\newtheorem{observation}{Observation}

\newdefinition{definition}{Definition}
\newproof{proof}{Proof}

\begin{document}

\begin{frontmatter}

\title{Efficiently Testing $T$-Interval Connectivity in Dynamic Graphs\tnoteref{label1}}
\tnotetext[label1]
{A short version of this paper appeared in the proceedings of CIAC 2015~\cite{CKNP15}.\\
\indent
Part of this work was done while Joseph Peters was visiting the LaBRI as a guest professor of the University of Bordeaux. This work was partially funded by the ANR projects DISPLEXITY (ANR-11-BS02-014) and  ESTATE (ANR-16-CE25-0009-03). This study has been carried out in the frame of ``The Investments for the Future'' Programme IdEx Bordeaux – CPU (ANR-10-IDEX-03-02). The work of Joseph Peters was partially supported by NSERC of Canada.}

\author[LABRI]{Arnaud Casteigts}
\ead{arnaud.casteigts@labri.fr}

\author[LABRI]{Ralf Klasing}
\ead{ralf.klasing@labri.fr}

\author[LABRI]{Yessin M.~Neggaz}
\ead{mneggaz@labri.fr}

\author[SFU]{Joseph G.~Peters}
\ead{peters@cs.sfu.ca}

\address[LABRI]{LaBRI, CNRS, University of Bordeaux, France}
\address[SFU]{School of Computing Science, Simon Fraser University, Burnaby, BC, Canada}


\begin{abstract}
Many types of dynamic networks are made up of durable entities whose links evolve over time. When considered from a {\em global} and {\em discrete} standpoint, these networks are often modelled as evolving graphs, i.e. a sequence of graphs $\G=(G_1,G_2,...,G_{\delta})$ such that $G_i=(V,E_i)$ represents the network topology at time step $i$. Such a sequence is said to be $T$-interval connected if for any $t\in [1, \delta-T+1]$ all graphs in $\{G_t,G_{t+1},...,G_{t+T-1}\}$ share a common connected spanning subgraph. In this paper, we consider the problem of deciding whether a given sequence $\G$ is $T$-interval connected for a given $T$. We also consider the related problem of finding the largest $T$ for which a given $\G$ is $T$-interval connected. We assume that the changes between two consecutive graphs are arbitrary, and that two operations, {\em binary intersection}
and {\em connectivity testing},
are available to solve the problems. We show that $\Omega(\delta)$ such operations are required to solve both problems, and we present optimal $O(\delta)$ online algorithms for both problems.
We extend our online algorithms to a dynamic setting in which connectivity is based on the recent evolution of the network.
\end{abstract}

\begin{keyword}
$T$-interval connectivity, Dynamic graphs, Time-varying graphs
\end{keyword}
\end{frontmatter}

\section{Introduction}

Dynamic networks consist of entities making contact over time with one
another. The types of dynamics resulting from these interactions are
varied in scale and nature. For instance, some of these networks remain
connected at all times~\cite{OW05}; others are always
disconnected~\cite{JFP04} but still offer some kind of connectivity over
time and space ({\em temporal} connectivity); others are recurrently
connected, periodic, etc. 
\review{All of these contexts can be represented as
properties of dynamic graphs (also called time-varying graphs, evolving graphs, or temporal graphs). A number of such classes were identified in recent literature and organized into a hierarchy in~\cite{CFQS12}. Each of these classes corresponds to specific properties which play a role either in the complexity or in the feasibility of distributed problems. For example, it was shown in~\cite{CFMS15} that if the edges are {\it recurrent} (i.e. if an edge appears once, it will reappear infinitely often), denoted class ${\cal R}$, then such a property guarantees the feasibility of a certain type of optimal broadcast with termination detection (namely, {\it foremost} broadcast). However, it is not sufficient to satisfy other measures of optimality, such as {\it shortest} or {\it fastest} broadcast. Strenghtening the assumption to having a {\it bound} on the reappearance time (class ${\cal B}$) makes it possible to satisfy the shortest measure, and making it even stronger by assuming that edges are periodic (class ${\cal P}$) enables fastest broadcast. These three classes have been shown to play a role in a variety of problems (see e.g.~\cite{AKM14,FMS13,RSCW14}). Another important class, which is less constrained (and thus more general) is the class of all graphs with recurrent temporal connectivity (i.e. all nodes can recurrently reach each other through journeys), corresponding to class ${\cal C}_5$ in the hierarchy of~\cite{CFQS12}. This property is very general, and it is used (implicitly or explicitly) in a number of recent studies addressing distributed problems in highly-dynamic environments~\cite{BDD16,BT16,BDK16,DKP15}. Interestingly, this property was considered more than two decades ago by Awerbuch and Even~\cite{AE84}.
}

\review{Given a dynamic graph, a natural question to ask is to which of the classes
this graph belongs. This question is interesting in several respects. Firstly, most of the known
classes correspond to necessary or sufficient conditions
for given distributed problems or algorithms (broadcast, election, spanning
trees, token forwarding, etc.). Thus, being able to classify a graph in the
hierarchy is useful for determining which problems can be solved
on that graph. Furthermore, it is useful for
choosing a good algorithm in settings where some properties are guaranteed (as in the above example with classes ${\cal R}, {\cal B}$, and ${\cal P}$). Hence, when targeting a given scenario from the real world, 
an algorithm designer may first record 
some topological traces from the target environment and then test which useful properties are satisfied.
Alternatively, online algorithms that process dynamic graphs as they evolve
could accomplish the same goal without the need to store all the traces.
Besides distributed algorithms, a growing amount of research is now focusing on testing properties (or computing structures) in dynamic graphs. Recent examples include computing reachability graphs~\cite{BCCJN14,WDCG12}, enumerating maximal cliques~\cite{VLM16}, and determining the hardness of computing even simple metrics like temporal diameter (that is, how long it takes in the worst case to communicate through journeys) when the evolution is not known in advance~\cite{CG13}. A less recent but seminal example is the computation of foremost, shortest, or fastest journeys~\cite{BFJ03} which can be used indirectly to test membership in a number of dynamic graph classes~\cite{CCF09}.}

\review{In this paper, we focus on another important class of dynamic graphs called {\em $T$-interval connected} graphs. This class captures two fundamental aspects of a network---{\em stability} and {\em connectivity}---through a single parameter $T$. This parameter was 
identified in~\cite{KLO10} as playing a role in several distributed
problems, such as determining the size of a network or computing a
function of the initial inputs of the nodes. 
The definition of this class is closely related to a certain way of representing dynamic networks. It is often
convenient, when looking at the evolution of the topology from a {\em global} standpoint, to represent a dynamic graph as a sequence of
graphs $\G=(G_1,G_2,...,G_\delta)$, each of which corresponds to
the state of the topology at a given (discrete) time instant.
This representation, first suggested by Harary and Gupta~\cite{HG97}, was called {\em untimed} evolving graphs in~\cite{BFJ03} (in which a more general version is considered).
}
\joe{Informally, $T$-interval connectivity requires that, for every $T$ consecutive graphs in the sequence $\G$, there exists a common connected spanning subgraph.
In \cite{KLO10}, the authors focussed on two problems. In the {\em $k$-token dissemination problem}, each of $k$ nodes has a piece of information that must be collected by all $n$ nodes of the graph. The authors were especially interested in the case $k=n$ (also known as {\em gossiping}). To solve the $n$-token dissemination problem, the nodes need to solve the {\em counting problem} which determines the number of nodes in the graph, so that the nodes know how many pieces of information to collect. Solving the $n$-token dissemination problem allows the computation of any function of the initial states of the nodes. It was shown that both problems can be solved in $O(n^2)$ rounds in $1$-interval connected graphs. For $T$-interval connected graphs with $T > 1$ and $T$ unknown, they showed that both problems can be solved in $O(\min\{n^2, n + n^2 \log(n)/T\})$ rounds. If $T$ is known, then both problems are solvable in $O(n+n^2/T)$ rounds.}

\joe{The class of $T$-interval connected graphs} generalizes the class of dynamic 
graphs that are connected at all time instants~\cite{OW05}. Indeed, the
latter corresponds to the case that $T=1$. From a
set-theoretic viewpoint, however, every $T>1$ induces a class of graphs
that is a strict subset of the class in~\cite{OW05} because a graph that is
$T$-interval connected is obviously $1$-interval connected. Hence,
$T$-interval connectivity is more specialized in this sense.

In this paper, we look at the problem of deciding whether a given
sequence $\G$ is $T$-interval connected for a given $T$. We also
consider the related problem of finding the largest $T$ for which the
given $\G$ is $T$-interval connected. 
\joe{We adopt a model that allows us to focus on high-level strategies without being concerned about the lower-level details of specific data structures for representing the graphs. The basic elements in our model are graphs and the basic operations on these elements are {\em binary intersection} (given two graphs, compute their intersection) and {\em connectivity testing} (given a graph, decide whether it is connected). These two high-level operations have a strong and natural connection with the problems that we are studying. This approach is suitable for general dynamic graphs in which the details of changes between successive graphs in a sequence are arbitrary. We discuss our model in more detail in Section~\ref{sec:definitions}.
}

We first show that
both problems require $\Omega(\delta)$ such operations using the \joe{straightforward}
argument that every graph of the sequence must be considered at least
once. \review{Interestingly,} 
we show that both problems can be solved
using only $O(\delta)$ such operations and we develop optimal online
algorithms that achieve these matching bounds.
\joe{Hence, our efficient high-level implementations counterbalance the costs of implementing the operations. In fact, both operations can be computed in linear time in the number of edges using suitable lower-level data structures (see Observation~\ref{obs:cost}) and could benefit from dedicated circuits (or optimized code).}

The paper is organized as follows. Section~\ref{sec:definitions}
presents the main definitions and makes some basic observations,
including the fact that both problems can be solved using
$O(\delta^2)$ operations (intersections or connectivity tests) by a
naive strategy that examines $O(\delta^2)$ intermediate graphs.
Section~\ref{sec:intermediate} presents a second strategy, yielding
upper bounds of $O(\delta \log \delta)$ operations for both problems.
Its main interest is in the fact that it can be parallelized, and this
allows us to classify both problems as being in {\bf NC} (i.e. Nick's
class). In Section~\ref{sec:optimal} we present an optimal
strategy which we use to solve both problems online in $O(\delta)$
operations. This strategy exploits structural properties of the
problems to construct carefully selected subsequences of the
intermediate graphs. In particular, only $O(\delta)$ of the
$O(\delta^2)$ intermediate graphs are selected for evaluation by the
algorithms.
In Section~\ref{sec:dynamic}, we extend our online algorithms to a
dynamic setting in which the measure of connectivity is based on the recent
evolution of the network.

\section{Definitions and Basic Observations}
\label{sec:definitions}

\boldpara{Graph Model.}
In this work, we consider dynamic graphs that are given as untimed evolving graphs, that is, a sequence $\G=(G_1,G_2,...,G_\delta)$ of graphs such that $G_i=(V,E_i)$ describes the network topology at (discrete) time $i$. The parameter $\delta$ is called the {\em length} of the sequence $\G$ (also known as the {\em lifetime}). It corresponds to the number of time steps that this graph covers. Observe that $V$ is non-varying; only the set of edges varies. 
Unless otherwise stated, we consider {\em undirected} edges throughout the paper, which is the setting in which $T$-interval connectivity was originally introduced. However, the fact that our algorithms \review{use high-level operations} allows them to work exactly the same for $T$-interval strong connectivity (which is the analogue of $T$-interval connectivity for directed graphs~\cite{KLO10}), provided that both basic operations (i.e. intersection and connectivity test) are given. As we shall discuss, these operations have linear cost in the number of edges in both directed and undirected graphs.

\begin{definition}[Intersection graph]
Given a (finite) set $S$ of graphs $\{G'=(V,E'),G''=(V,E''),\dots\}$, we call the graph $(V,\cap\{E',E'',\dots\})$ the {\em intersection graph} of $S$ and denote it by $\cap\{G',G'',\dots\}$. When the set consists of only two graphs, we talk about {\em binary intersection} and use the infix notation $G' \cap G''$. If the intersection involves a consecutive subsequence $(G_i,G_{i+1},\dots,G_j)$ of a dynamic graph $\G$, then we denote the intersection graph $\cap\{G_i,G_{i+1},\dots,G_j\}$ simply as $G_{(i,j)}$.
\end{definition}

\begin{definition}[\boldmath $T$-interval connectivity]
\review{A dynamic graph $\G$ is said to be {\em $T$-interval connected}, for $T \leq \delta$, if 
the intersection graph $G_{(t,t+T-1)}$ is connected for every $t\in [1, \delta-T+1]$. In other words, all graphs in $\{G_t,G_{t+1},...,G_{t+T-1}\}$ share a common connected spanning subgraph \joe{for every $t\in [1, \delta-T+1]$.}
}
\end{definition}

\begin{definition}[Testing \boldmath $T$-interval connectivity]
We will use the term \tinter to refer to the problem of deciding whether a dynamic graph $\G$ is $T$-interval connected for a given $T$.
\end{definition}

\begin{definition}[Interval connectivity]
We will use \inter to refer to the problem of finding $\max\{T: \G$ is $T$-interval connected$\}$ for a given $\G$.
\end{definition}

Let $\G^{T}=(G_{(1,T)},G_{(2,T+1)},...,G_{(\delta-T+1,\delta)})$. We call $\G^T$ the $T^{th}$ {\em row} in $\G$'s intersection hierarchy, as depicted in Figure~\ref{fig:example}. A particular case is $\G^1=\G$. For any $1 \leq i \leq \delta-T+1$, we define $\IG{T}{i} = G_{(i,i+T-1)}$. We call $\IG{T}{i}$ the {\em $i^{th}$ element of row $\G^T$} and $i$ is called the {\em index of $\IG{T}{i}$ in row $\G^T$}. 

\begin{observation}\label{obs:row-connected}
  \review{By definition,} a dynamic graph $\G$ is $T$-interval connected if and only if all graphs in $\G^T$ are connected. 
\end{observation}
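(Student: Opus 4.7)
The plan is simply to unfold both sides of the equivalence to the same quantified statement about connectivity of the intersection graphs $G_{(i,i+T-1)}$; there is no non-trivial content beyond matching the definitions.

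First, I would recall the definition of $T$-interval connectivity: $\G$ is $T$-interval connected iff $G_{(t,t+T-1)}$ is connected for every $t\in[1,\delta-T+1]$. This is the right-hand side written in terms of the intersection graphs indexed by their starting position.

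Next, I would unfold the definition of $\G^T$. By definition, $\G^T=(G_{(1,T)},G_{(2,T+1)},\dots,G_{(\delta-T+1,\delta)})$, and its $i$-th element is $\IG{T}{i}=G_{(i,i+T-1)}$ for $i\in[1,\delta-T+1]$. Hence the statement ``all graphs in $\G^T$ are connected'' is precisely the statement that $G_{(i,i+T-1)}$ is connected for every $i\in[1,\delta-T+1]$.

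Finally, I would observe that after a trivial renaming of the index variable ($t\leftrightarrow i$, both ranging over $[1,\delta-T+1]$), the two quantified statements coincide, which yields the biconditional in both directions simultaneously. The only ``obstacle'' is to make sure that the range of indices matches on both sides and that the notations $G_{(i,j)}$ and $\IG{T}{i}$ are being used consistently with their defining equations; beyond that, no combinatorial argument is needed.
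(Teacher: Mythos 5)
Your proposal is correct and matches the paper's treatment: the paper states this as an observation with no proof precisely because, as you show, both sides unfold via the definitions of $T$-interval connectivity and of the row $\G^T$ to the identical quantified statement that $G_{(i,i+T-1)}$ is connected for every $i\in[1,\delta-T+1]$. Your care about the index ranges and the correspondence $\IG{T}{i}=G_{(i,i+T-1)}$ is exactly the only content there is to check.
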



\def\carre (#1,#2,#3){
  \path (#1,#2+1) node[node] (a){};
  \path (#1+1,#2+1) node[node] (b){};
  \path (#1,#2) node[node] (c){};
  \path (#1+1,#2) node[node] (d){};
  \path (#1+.5,#2-.8) node (e){#3};
}
\newcommand{\g}[2]{\ensuremath{G_{#1}^{#2}}}
\tikzstyle{every node}=[]
\begin{figure}
  \centering
  \begin{tabular}{cccc}
    \begin{tikzpicture}[xscale=.5, yscale=.5]
		
		\node (G) [] 
    at (-2.9,0.5) {\large{$\G=\G^1$}};
    \node (G) [] 
    at (-0.4,3.4) {\large{$\G^2$}};
    \node (G) [] 
    at (1,6.3) {\large{$\G^3$}};
    \node (G) [] 
    at (2.4,9.2) {\large{$\G^4$}};
    
      \carre (0,0,$G_1$)
      \draw (a)--(b);
      \draw (a)--(c);
      \draw (c)--(b);
      \draw (c)--(d);
      \draw (b)--(d);
      \carre (3,0,$G_2$)
      \draw (a)--(c);
      \draw (c)--(b);
      \draw (c)--(d);
      \draw (a)--(b);
      \carre (6,0,$G_3$)
      \draw (a)--(c);
      \draw (d)--(b);
      \draw (c)--(d);
      \draw (a)--(b);
      \carre (9,0,$G_4$)
      \draw (a)--(c);
      \draw (b)--(c);
      \draw (d)--(b);
      \draw (c)--(d);
      \draw (a)--(b);
      \carre (12,0,$G_5$)
      \draw (b)--(c);
      \draw (a)--(d);
      \draw (d)--(b);
      \draw (c)--(d);
      \draw (a)--(b);
      \carre (15,0,$G_6$)
      \draw (b)--(c);
      \draw (d)--(b);
      \draw (c)--(d);
      \draw (a)--(b);
      \carre (18,0,$G_7$)
      \draw (b)--(c);
      \draw (c)--(d);
      \draw (a)--(b);
      \carre (21,0,$G_8$)
      \draw (a)--(b);
      \draw (a)--(c);
      \draw (b)--(c);
      \draw (c)--(d);

      \carre (1.5,2.9,$G_{(1,2)}$)
      \draw (a)--(b);
      \draw (a)--(c);
      \draw (c)--(b);
      \draw (c)--(d);
      \carre (4.5,2.9,$G_{(2,3)}$)
      \draw (a)--(c);
      \draw (a)--(b);
      \draw (c)--(d);
      \carre (7.5,2.9,$G_{(3,4)}$)
      \draw (a)--(c);
      \draw (b)--(d);
      \draw (a)--(b);
      \draw (c)--(d);
      \carre (10.5,2.9,$G_{(4,5)}$)
      \draw (b)--(c);
      \draw (b)--(d);
      \draw (a)--(b);
      \draw (c)--(d);
      \carre (13.5,2.9,$G_{(5,6)}$)
      \draw (b)--(c);
      \draw (b)--(d);
      \draw (a)--(b);
      \draw (c)--(d);
      \carre (16.5,2.9,$G_{(6,7)}$)
      \draw (b)--(c);
      \draw (c)--(d);
      \draw (a)--(b);
      \carre (19.5,2.9,$G_{(7,8)}$)
      \draw (a)--(b);
      \draw (b)--(c);
      \draw (c)--(d);

      \carre (3,5.8,$G_{(1,3)}$)
      \draw (a)--(b);
      \draw (a)--(c);
      \draw (c)--(d);
      \carre (6,5.8,$G_{(2,4)}$)
      \draw (a)--(b);
      \draw (a)--(c);
      \draw (c)--(d);
      \carre (9,5.8,$G_{(3,5)}$)
      \draw (a)--(b);
      \draw (b)--(d);
      \draw (c)--(d);
      \carre (12,5.8,$G_{(4,6)}$)
      \draw (b)--(c);
      \draw (b)--(d);
      \draw (a)--(b);
      \draw (c)--(d);
      \carre (15,5.8,$G_{(5,7)}$)
      \draw (b)--(c);
      \draw (a)--(b);
      \draw (c)--(d);
      \carre (18,5.8,$G_{(6,8)}$)
      \draw (b)--(c);
      \draw (a)--(b);
      \draw (c)--(d);

      \carre (4.5,8.7,$G_{(1,4)}$)
      \draw (a)--(b);
      \draw (a)--(c);
      \draw (c)--(d);
      \carre (7.5,8.7,{$G_{(2,5)}$})
      \draw[ultra thick] (a)--(b);
      \draw[ultra thick] (c)--(d);
      \carre (10.5,8.7,$G_{(3,6)}$)
      \draw (a)--(b);
      \draw (c)--(d);
      \draw (b)--(d);
      \carre (13.5,8.7,$G_{(4,7)}$)
      \draw (a)--(b);
      \draw (c)--(d);
      \draw (b)--(c);
      \carre (16.5,8.7,$G_{(5,8)}$)
      \draw (a)--(b);
      \draw (c)--(d);
      \draw (b)--(c);
    \end{tikzpicture}
  \end{tabular}
  \caption{\label{fig:example} \review{An example} of an intersection hierarchy for a given dynamic graph $\G$ of length $\delta=8$. Here, $\G$ is 3-interval connected, but not 4-interval connected; $\G^4$ contains a disconnected graph $G_{(2,5)}$ because $G_2,G_3,G_4,G_5$ share no connected spanning subgraph.}
\end{figure}
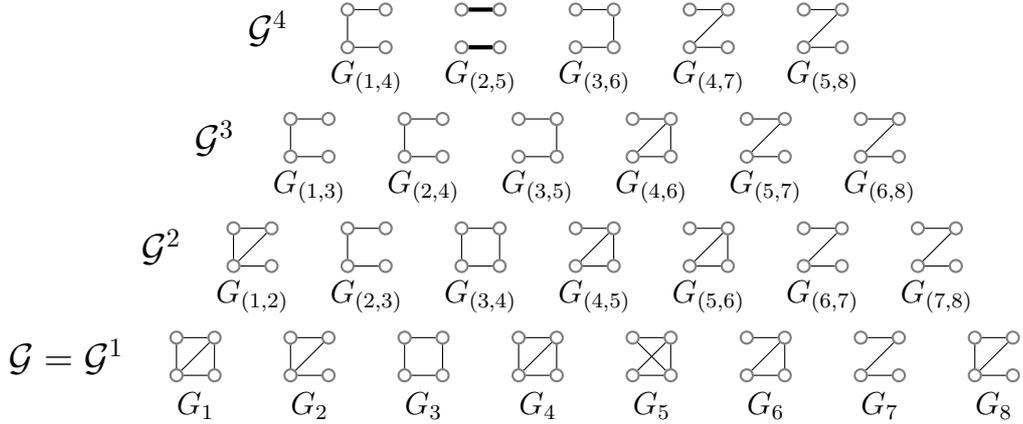

\boldpara{Computational Model.}
As \review{mentioned} in Observation~1, the concept of $T$-interval connectivity can be reformulated quite naturally in terms of the connectivity of some intersection graphs. For this reason, we consider two building block operations: {\em binary intersection} (given two graphs, compute their intersection) and {\em connectivity testing} (given a graph, decide whether it is connected). \review{This approach is suitable for general dynamic graphs \joe{in which} the details of changes between successive graphs in a sequence are arbitrary. If more structural information about the evolution of the dynamic graphs were known,
for example, if it were known that the number of changes between each pair of consecutive graphs is bounded by a constant,
then algorithms could benefit from the use of sophisticated data structures and a lower-level approach \joe{than ours} might be more appropriate.}

\begin{observation}[Cost of the operations]
\label{obs:cost}
\joe{Using a sorted adjacency list in which the neighbours of each node are sorted}, a binary intersection can be performed in linear time in the number of edges. Checking connectivity of a graph can also be done in linear time in the number of edges. In the case of undirected graphs, it can be done by building a depth-first search tree from an arbitrary root node and testing whether all nodes are reachable from the root node. Tarjan's algorithm for strongly connected components \review{\cite{Tarjan72depthfirst}} can be used for directed graphs. 
Hence, both the {\em intersection} operation and the {\em connectivity testing} operation have similar costs. 
In what follows, we will refer to them as {\em elementary} operations. One advantage of using these elementary operations is that the high-level logic of the algorithms becomes elegant and simple. Also, their cost can be counterbalanced by the fact that they are highly generic and thus could benefit from dedicated circuits \review{(e.g., \joe{similar to} what has been done for all-pairs shortest-paths computations with FPGAs~\cite{FPGA})} or optimized code.
\end{observation}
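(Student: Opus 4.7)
The proof decomposes into two independent linear-time claims, one per elementary operation, so my plan is to handle them separately and then observe that their costs are comparable.

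For binary intersection, I assume each $G_i=(V,E_i)$ is stored as an array of adjacency lists, and that each list is maintained in a canonical order (say, sorted by vertex index). To build $G'\cap G''$ for $G'=(V,E')$ and $G''=(V,E'')$, I would process the vertices one at a time: for each $v\in V$, simultaneously scan the two sorted neighbor lists of $v$ with two pointers, emitting exactly the neighbors that appear in both. This scan costs $O(\deg_{G'}(v)+\deg_{G''}(v))$, and summing over $v\in V$ yields $O(|E'|+|E''|)$, which is linear in the total number of input edges. If sorted adjacency lists are not available, I would fall back to inserting the edges of $G'$ into a hash set and streaming the edges of $G''$ against it, which achieves the same bound in expectation.

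For connectivity testing of an undirected graph $G=(V,E)$, I would invoke the textbook strategy: run a single depth-first (or breadth-first) traversal from an arbitrary root $r$, mark every vertex it reaches, and compare the number of marked vertices to $|V|$. In the adjacency-list model, each vertex is pushed and popped a constant number of times and each edge is examined at most twice (once from each endpoint), giving total time $O(|V|+|E|)$. For the directed variant needed for $T$-interval strong connectivity, I would appeal to Tarjan's strongly connected components algorithm, which returns the SCC decomposition in $O(|V|+|E|)$; strong connectivity then holds if and only if exactly one SCC is reported.

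The only real ``obstacle'' here is not mathematical but a matter of the underlying data-structure assumption: the linear bound on intersection genuinely requires an efficient way to test joint edge membership (sorted lists, hash sets, or arrays indexed by endpoint), and I would flag this as a modeling assumption rather than a step to be proved. Granted that assumption, both linear bounds are classical, and the conclusion of the observation --- that \emph{intersection} and \emph{connectivity testing} have comparable, $O(|E|)$ cost, and can therefore be treated uniformly as \emph{elementary} operations in the remainder of the paper --- follows immediately.
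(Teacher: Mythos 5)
Your proposal is correct and follows essentially the same route as the paper's own justification: adjacency-list merging for the binary intersection, a DFS from an arbitrary root with a reachability check for undirected connectivity, and Tarjan's SCC algorithm for the directed case, all in time linear in the number of edges. The extra detail you supply (sorted lists or hashing to realize the intersection bound) is a reasonable filling-in of the same argument rather than a different approach.
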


\boldpara{Naive Upper Bound.}
\review{It is not hard to show} 
that both problems are solvable using $O(\delta^2)$ elementary operations based on a naive strategy. \review{\joe{A naive algorithm computes all the rows of the intersection hierarchy, using the fact that each graph $G_{(i,j)}$ can be obtained from the intersection of the two graphs immediately below it, i.e. $G_{(i,j)} = G_{(i,j-1)}\cap G_{(i+1,j)}$.}} For instance, $G_{(3,6)}=G_{(3,5)}\cap G_{(4,6)}$ in Figure~\ref{fig:example}. Hence, each row $k$ can be computed from row $k-1$ using $O(\delta)$ binary intersections. In the case of \tinter, \review{\joe{one must compute all rows}} until the $T^{th}$ row, and then answer {\tt true} iff all graphs in this row are connected. The total cost is $O(\delta T)=O(\delta^2)$ binary intersections, plus $\delta-T+1=O(\delta)$ connectivity tests for the $T^{th}$ row. Solving \inter is similar except that one needs to test the connectivity of all new graphs during the process. If a disconnected graph is first found in some row $k$, then the answer is $k-1$. If all graphs are connected up to row $\delta$, then $\delta$ is the answer. Since there are $O(\delta^2)$ graphs in the intersection hierarchy, the total number of connectivity tests and binary intersections is $O(\delta^2)$.

\bigskip
\boldpara{Lower Bound.}
The following lower bound is valid for any algorithm that uses only the two elementary operations {\em binary intersection} and {\em connectivity test}.

\begin{lemma}\label{lem:lower-bound}
  $\Omega(\delta)$ elementary operations are necessary to solve \tinter. 
\end{lemma}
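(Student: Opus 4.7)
\emph{Proof sketch plan.} The plan is to run an adversary argument that forces every one of the $\delta$ input graphs to be touched directly by the algorithm at least once, and then to invoke a simple counting bound against the arity of the two elementary operations.

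First I would construct, for each index $i \in \{1,\ldots,\delta\}$, a pair of instances $\G^{(i,+)}$ and $\G^{(i,-)}$ of length $\delta$ that agree on every coordinate except position $i$, with one $T$-interval connected and the other not. A concrete, painless choice: let $\G^{(i,+)}$ consist of $\delta$ copies of the complete graph $K_n$, which is trivially $T$-interval connected, and obtain $\G^{(i,-)}$ from $\G^{(i,+)}$ by replacing the single graph at position $i$ with the edgeless graph on $V$. Any sliding window of length $T$ in the range $[1,\delta-T+1]$ that contains index $i$ (and such a window exists whenever $1\le T\le \delta$) then has empty intersection, so $\G^{(i,-)}$ is not $T$-interval connected. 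The two instances differ only at coordinate $i$ and have opposite answers.

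Next I would formalise the observation that a deterministic algorithm using only binary intersection and connectivity testing can only have its output depend on $G_i$ if at some point in its execution $G_i$ appears as a direct operand of some operation. More precisely, by induction on the number of operations performed, every graph that the algorithm ever constructs has a well-defined \emph{support} $S\subseteq\{1,\ldots,\delta\}$ equal to the union of the supports of its operands (with the original $G_k$ having support $\{k\}$). If the algorithm never submits $G_i$ as a direct operand, then no operand of any operation ever has $i$ in its support, so the outcomes of all intersections and all connectivity tests are identical on $\G^{(i,+)}$ and $\G^{(i,-)}$; hence the algorithm produces the same answer on both, contradicting correctness.

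Finally I would conclude with the counting bound. Each binary intersection directly consumes two graph operands and each connectivity test directly consumes one, so an algorithm that performs $k$ elementary operations exposes at most $2k$ direct-operand slots across its whole execution. By the previous step each of the $\delta$ original graphs $G_1,\ldots,G_\delta$ must occupy at least one such slot, hence $2k\ge\delta$, yielding $k=\Omega(\delta)$. The step I expect to be the main delicate point is the support/independence argument: phrasing it cleanly enough that the reader is convinced that "never a direct operand" really does imply computational indistinguishability, without smuggling in any hidden side channel beyond the two declared operations.
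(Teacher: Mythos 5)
Your proposal is correct and follows essentially the same adversary/indistinguishability argument as the paper: exhibit two instances differing only in one never-accessed graph $G_i$ but having opposite answers, and conclude the algorithm cannot distinguish them. Your version merely makes explicit two points the paper leaves implicit (the concrete hard instances, and the counting bound that $k$ operations expose at most $2k$ operand slots so some index is untouched when $k=o(\delta)$), which is a welcome tightening but not a different route.
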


\begin{proof}[by contradiction]
  Let ${\cal A}$ be an algorithm that uses only \review{intersection and connectivity testing} 
  and that decides whether any sequence of graphs is $T$-interval connected \review{using less than $\delta$ operations}. Then, for any sequence $\G$, at least one graph in $\G$ is never accessed by ${\cal A}$. Let $\G_1$ be a sequence that is $T$-interval connected and suppose that ${\cal A}$ decides that $\G_1$ is $T$-interval connected without accessing graph $G_k$. Now, consider a sequence $\G_2$ that is identical to $\G_1$ except $G_k$ is replaced by a disconnected graph $G_k'$.
Since $G_k'$ is never accessed, the executions of ${\cal A}$ on $\G_1$ and $\G_2$ are identical and ${\cal A}$ incorrectly decides that $\G_2$ is $T$-interval connected. \qed
\end{proof}

A similar argument can be used for \inter by making the answer $T$ dependent on the graph $G_k$ that is never accessed.

\section{Row-Based Strategy}
\label{sec:intermediate}

In this section, we present a basic strategy that improves upon the previous naive strategy, yielding upper bounds of $O(\delta \log \delta)$ operations for both problems. Its main interest is in the fact that it can be parallelized, and this allows us to show that both problems are in {\bf NC}, i.e. parallelizable on a PRAM with a polylogarithmic running time \review{\cite{GR88,Ja92}}. 
We first describe the algorithms for a sequential machine (RAM). \\

\review{
Informally, the general strategy is to compute only some of the rows of $\G$'s intersection hierarchy, \joe{using the fact that most graphs of the hierarchy can be obtained by intersecting two graphs that are several rows below (see Figure~\ref{fig:power}).
}}

\joe{The computation of each graph on the $\ell^{th}$ row involves the intersection of two graphs in the same row $\G^k$, and we need $k \geq \ell/2$ because the height of an intersection graph equals the length of the sequence that it represents in the graph $\G$.  Hence, to compute a row $\ell$, it is enough to compute only the rows below it with heights that are powers of 2, namely rows $\G^{2^{i}}$ which we call ``power rows'' (see Figure~\ref{fig:power}).}

\review{
More formally, the computation of ``power rows'' is based on the following lemma.}

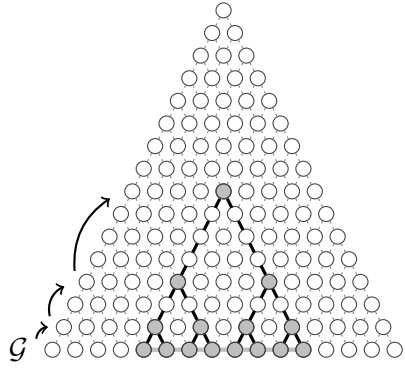
\begin{figure}
\begin{center}
\xdefinecolor{bordeaux}{rgb}{0.7,0.10,0.10}
\begin{tikzpicture}[scale=0.3]

\node (T) [font=\large] at (-0.5,-16) {\small $\G$};

\node (l1) [] at (0.5,-16) {};
\node (l2) [] at (1,-15) {};
\node (l4) []  at (2,-13) {};
\node (l8) [] at (4,-9) {};
\node (l10) [] at (5,-7.2) {};
\node (l11) []   at (5.5,-6) {};
\node (l12) []    at (6,-5) {};
\node (l16) []   at (8,-1) {};

\tikzstyle{searchPath}=[->,thick]
\tikzstyle{row}=[draw,rectangle,rounded corners]


\draw[searchPath] (l1) to[bend left=40] (0.8,-15);
\draw[searchPath] (l2) to[bend left=40] (l4);
\draw[searchPath] (l4) to[bend left] (l8);

  \tikzstyle{every node}=[draw, circle, inner sep=2pt,darkgray]
  \foreach \j in {16}{
     \foreach \i in {6,...,12}{
       \pgfmathsetmacro{\ii}{\i + (16-\j)/2};
       \draw[lightgray,ultra thick] (\ii-1,-\j) -- (\ii,-\j);
     }
  }

     \foreach \i in {5,...,12}{
       \pgfmathsetmacro{\ii}{\i};
       \path (\ii,-16) node[fill=lightgray] (\i16){};
	}

   \foreach \i in {5,7,9,11}{
       \pgfmathsetmacro{\ii}{\i + (16-15)/2};
       \path (\ii,-15) node[fill=lightgray] (\i15){};
	}

   \foreach \i in {5,9}{
       \pgfmathsetmacro{\ii}{\i + (16-13)/2};
       \path (\ii,-13) node[fill=lightgray] (\i13){};
	}

   \foreach \i in {5}{
       \pgfmathsetmacro{\ii}{\i + (16-9)/2};
       \path (\ii,-9) node[fill=lightgray] (\i9){};
	}

   
  \tikzstyle{every node}=[draw, circle, inner sep=2pt,darkgray]
  \foreach \j in {1,...,16}{
     \foreach \i in {1,...,\j}{
       \pgfmathsetmacro{\ii}{\i + (16-\j)/2};
       \path (\ii,-\j) node[thin] (\i\j){};
     }
  }

  \tikzstyle{every path}=[thin,dashed,gray,dash pattern=on 1pt off 1.5pt]
  
  \foreach \j in {1,...,15}{
     \foreach \i in {1,...,\j}{
       \pgfmathtruncatemacro{\jj}{\j + 1};
       \pgfmathtruncatemacro{\ii}{\i + 1};
       \draw (\i\jj) -- (\i\j);
       \draw (\ii\jj) -- (\i\j);
     }
  }
  
    \tikzstyle{every path}=[very thick,dashed,black,dash pattern=on 3.8pt off 5.6pt]

  \draw (516) -- (515);
  \draw (616) -- (515);
  \draw (716) -- (715);
  \draw (816) -- (715);
  \draw (916) -- (915);
  \draw (1016) -- (915);
  \draw (1116) -- (1115);
  \draw (1216) -- (1115);
  \draw (515) -- (513);
  \draw (715) -- (513);
  \draw (915) -- (913);
  \draw (1115) -- (913);
  \draw (513) -- (59);
  \draw (913) -- (59);

\end{tikzpicture}
\end{center}
\caption{\review{Example of computation of the intersection graph $\G^{8}[5]$ corresponding to the sequence $\{G_5, G_6, G_7, G_8, G_9, G_{10}, G_{11}, G_{12}\}$. \joe{The grey nodes in rows 2 and 4 and the initial sequence (in row 1) are the only graphs needed to compute $\G^{8}[5]$ in row 8.}}}\label{fig:power}
\end{figure} 

\begin{lemma}\label{lem:row-based}
  If some row $\G^{k}$ is already computed, then any row $\G^{\ell}$ for  $k+1 \leq \ell \leq 2k$ can be computed with $O(\delta)$ elementary operations.
\end{lemma}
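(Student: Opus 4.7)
The plan is to express each graph in the target row $\G^{\ell}$ as a binary intersection of two graphs that are already present in the precomputed row $\G^{k}$. Specifically, for every valid index $i$, I would establish the identity
$$\IG{\ell}{i} \;=\; \IG{k}{i} \;\cap\; \IG{k}{i+\ell-k}.$$
This holds because the two time windows $[i,i+k-1]$ and $[i+\ell-k,i+\ell-1]$ each have length $k$ and their union is exactly $[i,i+\ell-1]$; the hypothesis $\ell\leq 2k$ is precisely what guarantees that there is no gap between these two windows (i.e., $i+\ell-k \leq i+k$). Taking intersections over the two windows separately and then intersecting gives the intersection over their union, which is $\IG{\ell}{i}$ by definition.

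Next I would verify the index bookkeeping, which is the only routine subtlety. For $i\in[1,\delta-\ell+1]$, the right-hand operand $\IG{k}{i+\ell-k}$ must have index in the valid range $[1,\delta-k+1]$ of row $\G^{k}$. From $i\geq 1$ and $\ell\geq k+1$ we get $i+\ell-k\geq 2\geq 1$, and from $i\leq \delta-\ell+1$ we get $i+\ell-k\leq \delta-k+1$. So both operands are available in $\G^{k}$.

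Finally I would count the cost: row $\G^{\ell}$ contains exactly $\delta-\ell+1 = O(\delta)$ graphs, each obtained by a single binary intersection between two graphs already stored from row $\G^{k}$. This totals $O(\delta)$ elementary operations, as claimed.

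The main "obstacle" is really only notational: making sure the two length-$k$ windows chosen actually tile the length-$\ell$ window (this is where $\ell\leq 2k$ enters) and that the corresponding indices remain legal within $\G^{k}$. There is no genuine combinatorial difficulty, and the argument does not require any connectivity test — only binary intersections — which is consistent with the fact that the lemma is about \emph{computing} a row, independently of how we will later query it for \tinter or \inter.
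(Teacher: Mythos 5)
Your proof is correct and follows essentially the same route as the paper: both establish the identity $\IG{\ell}{i} = \IG{k}{i} \cap \IG{k}{i+\ell-k}$ and count the $\delta-\ell+1 = O(\delta)$ binary intersections needed for the row. Your added remarks on why $\ell \leq 2k$ ensures the two windows tile $[i,i+\ell-1]$ and on the index bookkeeping are sound elaborations of what the paper leaves implicit.
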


\begin{proof}
  Assume that row $\G^{k}$ is already computed and that one wants to compute row $\G^{\ell}$ for some $k+1 \leq \ell \leq 2k$. Note that row $\G^{\ell}$ consists of the entries $\IG{\ell}{1},\ldots,\IG{\ell}{\delta-\ell+1}$.
Now, observe that for any $k+1 \leq \ell \leq 2k$ and for any $1 \leq i \leq \delta-\ell+1$, 
$\IG{\ell}{i} = G_{(i,i+\ell-1)} = G_{(i,i+k-1)} \cap G_{(i+\ell-k,i+\ell-1)} = \IG{k}{i} \cap \IG{k}{i+\ell-k}$.
Hence, $\delta-\ell+1 = O(\delta)$ intersections are sufficient to compute all of the entries of row $\G^{\ell}$. \qed
\end{proof}

\boldpara{\boldmath $T$-\inter.}
Using Lemma~\ref{lem:row-based}, we can incrementally compute ``power rows'' $\G^{2^{i}}$ for all $i$ from $1$ to $\lceil\log_2 T\rceil-1$ without computing the intermediate rows. Then, we compute row $\G^T$ directly from row $\G^{2^{\lceil\log_2 T\rceil-1}}$ (again using Lemma~\ref{lem:row-based}). This way, we compute $\lceil\log_2 T\rceil=O(\log \delta)$ rows, \review{\joe{using at most $\delta$ intersections per row}. Finally, we perform  $O(\delta)$ connectivity tests to test the connectivity of all graphs in $\G^T$. So, the total cost is $O(\delta \log \delta)$ elementary operations.}\\

\review{\joe{
In Figure~\ref{fig:power_tinter}, we show an example of an execution that tests $T$-interval connectivity based on power rows for a graph $\G$ of length $\delta=16$ with $T=11$. The computation of the power rows stops upon reaching $\G^{8}$ which is the last power row before $\G^{11}$. The graphs of the $T^{th}$ row are then computed from intersection graphs in $\G^{8}$ according to Lemma~\ref{lem:row-based}. For example, $\G^{11}[2]= \G^{8}[2] \cap \G^{8}[5]$. Finally, $T$-interval connectivity is verified by testing the connectivity of each graph in $\G^{11}$.\\
}}

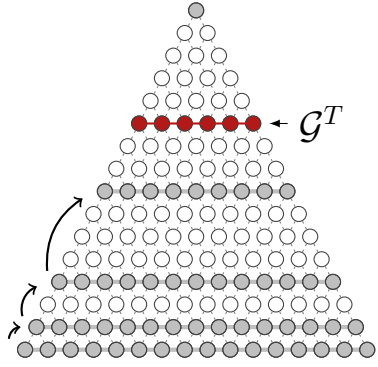
\begin{figure}
\begin{center}
\xdefinecolor{bordeaux}{rgb}{0.7,0.10,0.10}
\begin{tikzpicture}[scale=0.3]
\node (T) [font=\large] at (14,-6) {$\G^T$}; \draw[->, >=latex] (T) to (11.7,-6);

\node (l1) [] at (0.5,-16) {};
\node (l2) [] at (1,-15) {};
\node (l4) []  at (2,-13) {};
\node (l8) [] at (4,-9) {};
\node (l10) [] at (5,-7.2) {};
\node (l11) []   at (5.5,-6) {};
\node (l12) []    at (6,-5) {};
\node (l16) []   at (8,-1) {};

\tikzstyle{searchPath}=[->,thick]
\tikzstyle{row}=[draw,rectangle,rounded corners]


\draw[searchPath] (l1) to[bend left=40] (0.8,-15);
\draw[searchPath] (l2) to[bend left=40] (l4);
\draw[searchPath] (l4) to[bend left] (l8);

  \tikzstyle{every node}=[draw, circle, inner sep=2pt,darkgray]
  \foreach \j in {9,13,15,16}{
     \foreach \i in {2,...,\j}{
       \pgfmathsetmacro{\ii}{\i + (16-\j)/2};
       \draw[lightgray,ultra thick] (\ii-1,-\j) -- (\ii,-\j);
     }
  }
 
  \foreach \j in {1,9,13,15,16}{
     \foreach \i in {1,...,\j}{
       \pgfmathsetmacro{\ii}{\i + (16-\j)/2};
       \path (\ii,-\j) node[fill=lightgray] (\i\j){};
     }
  }
   
  \tikzstyle{every node}=[draw, circle, inner sep=2pt,darkgray]
  \foreach \j in {1,...,16}{
     \foreach \i in {1,...,\j}{
       \pgfmathsetmacro{\ii}{\i + (16-\j)/2};
       \path (\ii,-\j) node[thin] (\i\j){};
     }
  }

  \foreach \j in {6}{
     \foreach \i in {1,...,\j}{
       \pgfmathsetmacro{\ii}{\i + (16-\j)/2};
       \path (\ii,-\j) node[fill=bordeaux] (\i\j){};
     }
  }

  \foreach \j in {6}{
     \foreach \i in {2,...,\j}{
       \pgfmathsetmacro{\ii}{\i + (16-\j)/2};
       \draw[bordeaux,thick] (\ii-1,-\j) -- (\ii,-\j);
     }
  }

  \tikzstyle{every path}=[thin,dashed,gray,dash pattern=on 1pt off 1.5pt]
  \foreach \j in {1,...,15}{
     \foreach \i in {1,...,\j}{
       \pgfmathtruncatemacro{\jj}{\j + 1};
       \pgfmathtruncatemacro{\ii}{\i + 1};
       \draw (\i\jj) -- (\i\j);
       \draw (\ii\jj) -- (\i\j);
     }
  }
  
\end{tikzpicture}
\end{center}
\caption{\review{Example of $T$-interval connectivity testing based on the computation of power rows. Red (dark) nodes are the graphs of the $T^{th}$ row.}}\label{fig:power_tinter}

\end{figure}

\boldpara{\inter.}
Here, we incrementally compute rows $\G^{2^{i}}$ until we find a row that contains a disconnected graph (thus, a connectivity test is performed after each intersection). By Lemma~\ref{lem:row-based}, each of these rows can be computed using $O(\delta)$ intersections.
Suppose that row $\G^{2^{j+1}}$ is the first power row that contains a disconnected graph, and that $\G^{2^j}$ is the row computed before $\G^{2^{j+1}}$.  Next, we do
a binary search of the rows between $\G^{2^j}$ and $\G^{2^{j+1}}$ to find the row $\G^{T}$ with the highest row number $T$ such that all graphs on this row are connected (see Figure~\ref{power} for an illustration of the algorithm). The computation of each of these rows is based on row $\G^{2^j}$ and takes $O(\delta)$ intersections by Lemma~\ref{lem:row-based}. Overall, we compute at most $2 \lceil\log_2 T\rceil = O(\log \delta)$ rows using $O(\delta \log \delta)$ intersections and the same number of connectivity tests.

\review{\joe{Figure \ref{power} shows an example of interval connectivity testing based on the computation of power rows and a binary search between the last two power rows computed.} In this example $\delta=16$ and the found value of $T$ is $11$. The computation of power rows stops upon reaching $\G^{16}$, i.e. the first power row containing a disconnected graph ($\times$). So $T$ must be between $8$ (included) and $16$ (excluded). A binary search between rows $\G^8$ and $\G^{16}$ is then used to find $\G^{11}$, the highest row where all graphs are connected.\\
}

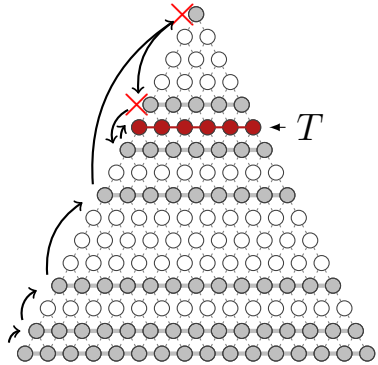
\begin{figure}

\begin{center}
\xdefinecolor{bordeaux}{rgb}{0.7,0.10,0.10}
\begin{tikzpicture}[scale=0.3]

\node (l1) [] at (0.5,-16) {};
\node (l2) [] at (1,-15) {};
\node (l4) []  at (2,-13) {};
\node (l8) [] at (4,-9) {};
\node (l10) [] at (5,-7.2) {};
\node (l11) []   at (5.5,-6) {};
\node (l12) []    at (6,-5) {};
\node (l16) []   at (8,-1) {};

\node () [font=\Large, red] at (7.9,-1) {$\times$};
\node () [font=\Large, red] at (5.9,-5) {$\times$};
\node (T) [font=\large] at (13.5,-6) {$T$};

\tikzstyle{searchPath}=[->,thick]
\tikzstyle{row}=[draw,rectangle,rounded corners]

\draw[->, >=latex] (T) to (11.7,-6);
\draw[searchPath] (l1) to[bend left=40] (0.8,-15);
\draw[searchPath] (l2) to[bend left=40] (l4);
\draw[searchPath] (l4) to[bend left] (l8);
\draw[searchPath] (l8) to[bend left] (l16);
\draw[searchPath] (l16) to[bend right] (l12);
\draw[searchPath] (l12) to[bend right=40] (l10);
\draw[searchPath] (5.2,-6.5) to[bend left=20] (5.4,-5.9);

  \tikzstyle{every node}=[draw, circle, inner sep=2pt,darkgray]
  \foreach \j in {5,7,9,13,15,16}{
     \foreach \i in {2,...,\j}{
       \pgfmathsetmacro{\ii}{\i + (16-\j)/2};
       \draw[lightgray,ultra thick] (\ii-1,-\j) -- (\ii,-\j);
     }
  }
  \foreach \j in {1,5,7,9,13,15,16}{
     \foreach \i in {1,...,\j}{
       \pgfmathsetmacro{\ii}{\i + (16-\j)/2};
       \path (\ii,-\j) node[fill=lightgray] (\i\j){};
     }
  }
  \tikzstyle{every node}=[draw, circle, inner sep=2pt,darkgray]
  \foreach \j in {1,...,16}{
     \foreach \i in {1,...,\j}{
       \pgfmathsetmacro{\ii}{\i + (16-\j)/2};
       \path (\ii,-\j) node[thin] (\i\j){};
     }
  }

  \foreach \j in {6}{
     \foreach \i in {1,...,\j}{
       \pgfmathsetmacro{\ii}{\i + (16-\j)/2};
       \path (\ii,-\j) node[fill=bordeaux] (\i\j){};
     }
  }
  \foreach \j in {6}{
     \foreach \i in {2,...,\j}{
       \pgfmathsetmacro{\ii}{\i + (16-\j)/2};
       \draw[bordeaux,thick] (\ii-1,-\j) -- (\ii,-\j);
     }
  }

  \tikzstyle{every path}=[thin,dashed,gray,dash pattern=on 1pt off 1.5pt]
  \foreach \j in {1,...,15}{
     \foreach \i in {1,...,\j}{
       \pgfmathtruncatemacro{\jj}{\j + 1};
       \pgfmathtruncatemacro{\ii}{\i + 1};
       \draw (\i\jj) -- (\i\j);
       \draw (\ii\jj) -- (\i\j);
     }
  }
\end{tikzpicture}
\end{center}
\caption{Example of interval connectivity testing based on the computation of power rows. Here $\delta=16$ and $T=11$.\label{power}}

\end{figure}

Now we establish that these problems are in {\bf NC} by showing that our algorithms are efficiently parallelizable. 

\begin{lemma}\label{lem:row-based-parallel}
  If some row $\G^{k}$ is already computed, then any row between $\G^{k+1}$ and $\G^{2k}$ can be computed in $O(1)$ time on an EREW PRAM with $O(\delta)$ processors.
\end{lemma}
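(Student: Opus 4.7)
The plan is to parallelize the construction used in Lemma~\ref{lem:row-based}. Recall that for $k+1 \leq \ell \leq 2k$ we have the identity $\IG{\ell}{i} = \IG{k}{i} \cap \IG{k}{i+\ell-k}$, and row $\G^{\ell}$ has at most $\delta - \ell + 1 = O(\delta)$ entries. I would assign one processor per entry and let each one perform the single binary intersection it needs. If concurrent reads were allowed, this would immediately give an $O(1)$-time algorithm on $O(\delta)$ processors.

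The main obstacle is the EREW restriction: no two processors may read the same memory cell in the same step. The read pattern above has exactly this conflict, since each graph $\IG{k}{j}$ is needed twice, namely by processor $j$ (which uses it as the left operand, $i=j$) and by processor $j-(\ell-k)$ (which uses it as the right operand, via $i+\ell-k=j$). The hard part is thus to schedule these reads without paying the $O(\log \delta)$ cost of a generic broadcast.

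To handle this, I would insert a single duplication phase before the intersections. Using a fresh array $C$ of length $\delta-k+1$, processor $j$ reads the reference $\IG{k}{j}$ from the original row and writes it to $C[j]$. All reads are from distinct cells and all writes are to distinct cells, so this step is EREW-safe and completes in one parallel step with $O(\delta)$ processors. In the subsequent step, processor $i$ takes its left operand $\IG{k}{i}$ from the original row and its right operand $\IG{k}{i+\ell-k}$ from $C[i+\ell-k]$. Distinct processors use distinct indices in both arrays, so again no cell is read twice. Each processor then performs exactly one intersection and writes $\IG{\ell}{i}$ to its slot in a fresh array holding $\G^{\ell}$, and these writes also go to distinct locations.

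Putting everything together, the algorithm uses $O(\delta)$ processors and a constant number of parallel steps, each either a conflict-free copy or one elementary intersection. This yields the claimed $O(1)$ time on an EREW PRAM. The only real subtlety, as anticipated, is the duplication trick that turns the ``degree-two'' read pattern of the formula into an EREW-compatible schedule; once that is in place, the bound follows directly from Lemma~\ref{lem:row-based}.
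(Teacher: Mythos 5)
Your proof is correct and follows the same overall plan as the paper's: one processor per entry of $\G^{\ell}$, a single binary intersection each via the identity $\IG{\ell}{i} = \IG{k}{i} \cap \IG{k}{i+\ell-k}$, and care taken to respect the EREW restriction. The only difference is how the read conflict on row $\G^{k}$ is resolved: you duplicate the row into a scratch array $C$ and then read left operands from the original and right operands from the copy, whereas the paper simply staggers the two reads in time --- in round one every $P_i$ reads $\IG{k}{i}$, in round two every $P_i$ reads $\IG{k}{i+\ell-k}$ --- and observes that within each round the processor-to-cell map is injective, so no cell is ever read concurrently. This shows your duplication phase is actually unnecessary: the ``degree-two'' demand on each $\IG{k}{j}$ only becomes an EREW violation if the two reads are simultaneous, and they never are once the left- and right-operand reads are placed in separate rounds. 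Your version is still valid and costs only one extra $O(1)$ phase, but the paper's scheduling argument is the more economical way to make the same point.
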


\begin{proof}
  Assume that row $\G^{k}$ is already computed, and that one wants to compute row $\G^{\ell}$, consisting of the entries $\IG{\ell}{1},\ldots,\IG{\ell}{\delta-\ell+1}$, for some $k+1 \leq \ell \leq 2k$.
Since $\IG{\ell}{i} = \IG{k}{i} \cap \IG{k}{i+\ell-k}$, $1 \leq i \leq \delta-\ell+1$, the computation of row $\G^{\ell}$ can be implemented on an EREW PRAM with $\delta-\ell+1$ processors in two rounds as follows.
Let $P_i$, $1 \leq i \leq \delta-\ell+1$, be the processor dedicated to computing $\IG{\ell}{i}$. In the first round $P_i$ reads $\IG{k}{i}$, and in the second round $P_i$ reads $\IG{k}{i+\ell-k}$. This guarantees that each $P_i$ has exclusive access to the entries of row $\G^{k}$ that it needs for its computation.
Hence, row $\G^{\ell}$ can be computed in $O(1)$ time on an EREW PRAM using $O(\delta)$ processors. \qed
\end{proof}

\boldpara{\boldmath $T$-\inter on an EREW PRAM.}
The sequential algorithm for this problem computes $O(\log \delta)$ rows. By Lemma~\ref{lem:row-based-parallel}, each of these rows can be computed in $O(1)$ time on an EREW PRAM with $O(\delta)$ processors. Therefore, all of the rows (and hence all necessary intersections) can be computed in $O(\log \delta)$ time with $O(\delta)$ processors.
The $O(\delta)$ connectivity tests for row $\G^T$ can be done in $O(1)$ time with $O(\delta)$ processors. Then, the processors can establish whether or not all graphs in row $\G^T$ are connected by computing the logical AND of the results of the $O(\delta)$ connectivity tests in time $O(\log \delta)$ on a EREW PRAM with $O(\delta)$ processors using standard techniques (see~\cite{GR88,Ja92}). The total time is $O(\log \delta)$ on an EREW PRAM with $O(\delta)$ processors.\\

\boldpara{\inter on an EREW PRAM.}
The sequential algorithm for this problem computes $O(\log \delta)$ rows. 
Differently from \tinter, a connectivity test is done for each of the computed graphs (rather than just those of the last row) and it has to be determined for each computed row whether or not all of the graphs are connected. This takes $O(\log \delta)$ time for each of the $O(\log \delta)$ computed rows using the same techniques as for \tinter.
The total time is $O(\log^2 \delta)$ on an EREW PRAM with $O(\delta)$ processors.

\section{Optimal Solution}
\label{sec:optimal}

We now present our strategy for solving both \tinter and \inter using a linear number of elementary operations (in the length $\delta$ of $\G$),
matching the $\Omega(\delta)$ lower bound presented in Section~\ref{sec:definitions}.
The strategy relies on the concept of {\em ladder}. Informally, a ladder is a sequence of graphs that ``climbs'' the intersection hierarchy bottom-up.

\begin{definition}
The {\em right ladder of length $l$ at index $i$}, denoted by $\rlad{l}{i}$, is the sequence of intersection graphs $(\IG{k}{i}, \, k=1,2,\ldots,l)$. The {\em left ladder of length $l$ at index $i$}, denoted by $\llad{l}{i}$, is the sequence $(\IG{k}{i-k+1}, \, k=1,2,\ldots,l)$. A right ({\it resp.} left) ladder of length $l-1$ at index $i$ is said to be {\em incremented} when graph $\IG{l}{i}$ ({\it resp.} $\IG{l}{i-l+1}$) is added to it, and the resulting sequence of intersection graphs is called the {\em increment} of that ladder.
\end{definition}

\begin{lemma}
  \label{lem:ladder}
  A ladder of length $l$ can be computed using $l-1$ binary intersections.
\end{lemma}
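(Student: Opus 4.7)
The plan is to show that each successive graph in a ladder can be obtained from the previous one by a single binary intersection, so that the whole sequence of $l$ graphs costs $l-1$ operations. The first entry $\IG{1}{i} = G_i$ is simply a graph from the input and needs no computation.

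For the right ladder $\rlad{l}{i}$, I would observe that for any $k \geq 2$,
\[
\IG{k}{i} \;=\; G_{(i,i+k-1)} \;=\; \bigl(\cap\{G_i,\ldots,G_{i+k-2}\}\bigr) \cap G_{i+k-1} \;=\; \IG{k-1}{i} \cap G_{i+k-1}.
\]
Hence, given the $(k-1)$-th entry of the ladder, the $k$-th entry is computed with exactly one binary intersection. Iterating from $k=2$ up to $k=l$ yields $l-1$ intersections in total. For the left ladder $\llad{l}{i}$, the same idea applies symmetrically using
\[
\IG{k}{i-k+1} \;=\; G_{i-k+1} \cap G_{(i-k+2,i)} \;=\; G_{i-k+1} \cap \IG{k-1}{i-k+2},
\]
so again each new entry costs one intersection.

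The ``hard part'' is not really hard here: the statement is essentially a recording of the telescoping structure of consecutive intersections. The only thing to be careful about is the base case (length $1$, which requires zero intersections and matches $l-1=0$) and making sure that the recurrences above use only the immediately preceding entry of the ladder, so that no auxiliary graphs need to be recomputed. Once these two recurrences are established, the count of $l-1$ operations follows by a trivial induction on $l$.
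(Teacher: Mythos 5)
Your proof is correct and follows essentially the same route as the paper's: both establish the recurrences $\IG{k}{i}=\IG{k-1}{i}\cap G_{i+k-1}$ for right ladders and $\IG{k}{i-k+1}=G_{i-k+1}\cap\IG{k-1}{i-k+2}$ for left ladders, and build each ladder bottom-up with one new intersection per level. Your explicit mention of the base case $l=1$ is a minor addition; otherwise the arguments coincide.
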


\begin{proof}
  Consider a right ladder $\rlad{l}{i}$. For any $k \in [2,l]$ it holds that $\IG{k}{i}=\IG{k-1}{i}\cap G_{i+k-1}$. Indeed, by definition, $\IG{k-1}{i}=\cap \{G_i, G_{i+1},..., G_{i+k-2}\}$. The ladder can thus be built bottom-up using a single new intersection at each level \review{\joe{(in particular, the binary intersection of the previous graph in the ladder and one of the graphs from the sequence $\G$)}}.
 
  Consider a left ladder $\llad{l}{i}$. For any $k \in [2,l]$ it holds that $\IG{k}{i-k+1}=G_{i-k+1} \, \cap \, \IG{k-1}{i-k+2}$. Indeed, by definition, $\IG{k-1}{i-k+2}=\cap \{G_{i-k+2}, \allowbreak G_{i-k+3},..., G_{i}\}$. The ladder can thus be built bottom-up using a single new intersection at each level. \qed
\end{proof}

\begin{lemma}
  \label{lem:intersection}
Given $\llad{l_{\ell}}{\il}$ and $\rlad{l_r}{\ir}$, 
any pair \review{of indices} $(i,k)$ such that $\ir-l_{\ell} \le i<\ir$ and $\ir-i < k \le \ir-i+l_r$,
 $\IG{k}{i}$ can be computed by a single binary intersection, namely $\IG{k}{i} = \IG{\ir-i}{i} \cap \IG{k-\ir+i}{\ir}$. 
\end{lemma}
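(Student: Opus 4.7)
The plan is to verify two things: first, that the two graphs $\IG{j-i}{i}$ and $\IG{k-j+i}{j}$ appearing in the claimed formula are actually available as entries of the given ladders; and second, that their binary intersection coincides with $\IG{k}{i}$. The key idea behind the whole statement is that the interval $[i, i+k-1]$ that defines $\IG{k}{i}$ can be split at the boundary between index $j-1$ and index $j$: everything to the left of $j$ lives inside the left ladder anchored at $j-1$, and everything from $j$ onward lives inside the right ladder anchored at $j$.

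For the membership part, I would start from the definition of the ladders. The left ladder $\llad{l_{\ell}}{j-1}$ consists of the graphs $\IG{k'}{(j-1)-k'+1} = \IG{k'}{j-k'}$ for $k' = 1, \dots, l_{\ell}$. Setting $k' = j-i$ gives precisely $\IG{j-i}{i}$, and the hypotheses $j - l_{\ell} \le i < j$ translate to $1 \le j-i \le l_{\ell}$, so this entry is indeed present in $\llad{l_{\ell}}{j-1}$. Similarly, the right ladder $\rlad{l_r}{j}$ consists of $\IG{k'}{j}$ for $k' = 1, \dots, l_r$; setting $k' = k-j+i$ gives $\IG{k-j+i}{j}$, and the hypotheses $j-i < k \le j-i+l_r$ translate to $1 \le k-j+i \le l_r$, so this entry is present in $\rlad{l_r}{j}$.

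For the identity itself, I would unfold each intersection graph using the notation $\IG{k}{i} = G_{(i, i+k-1)} = \cap\{G_i, G_{i+1}, \dots, G_{i+k-1}\}$. The left-ladder term becomes $\IG{j-i}{i} = \cap\{G_i, \dots, G_{j-1}\}$ and the right-ladder term becomes $\IG{k-j+i}{j} = \cap\{G_j, \dots, G_{i+k-1}\}$. Taking their binary intersection glues the two index ranges into $\{G_i, \dots, G_{i+k-1}\}$, which is exactly $\IG{k}{i}$. Since this is achieved by a single application of the binary intersection operation applied to two already-computed ladder entries, the cost claim is immediate.

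The only non-routine aspect is bookkeeping the three inequality chains (for $i$, for $k$, and for the levels $j-i$ and $k-j+i$) to confirm they line up with the ladder lengths; the algebraic content is a one-line splitting of a consecutive range of graph indices. I therefore expect no real obstacle beyond writing out the index arithmetic cleanly.
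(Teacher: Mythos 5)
Your proposal is correct and follows essentially the same route as the paper's proof: unfold the three intersection graphs as intersections over consecutive index ranges, split $[i,i+k-1]$ at $\ir$, and observe that the two factors are entries of the given ladders. You merely spell out the index arithmetic for ladder membership in more detail than the paper, which states it as immediate from the definitions.
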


\begin{figure}[h]

\xdefinecolor{bordeaux}{rgb}{0.7,0.10,0.10}
\centering
\begin{tikzpicture}[scale=.36]
\begin{scope}
  
  \tikzstyle{every node}=[draw, circle, inner sep=2.5pt,lightgray!70]
  \foreach \j in {9,...,16}{
     \foreach \i in {1,...,\j}{
       \pgfmathsetmacro{\ii}{\i + (16-\j)/2};
       \path (\ii,-\j) node (\i\j){};
     }
  }

  \tikzstyle{every path}=[thin,dashed,dash pattern=on 1pt off 1.5pt,lightgray!70]
  \foreach \j in {9,...,15}{
     \foreach \i in {1,...,\j}{
       \pgfmathtruncatemacro{\jj}{\j + 1};
       \pgfmathtruncatemacro{\ii}{\i + 1};
       \draw (\i\jj) -- (\i\j);
       \draw (\ii\jj) -- (\i\j);
     }
  }

  \tikzstyle{every node}=[draw, circle, inner sep=2.2pt, fill=lightgray, lightgray]
  \path (815) node {};
  \path (814) node {};
  \path (714) node {};
  \path (713) node {};
  \path (613) node {};
  \path (612) node {};
  \path (512) node {};
  \path (511) node {};
  \path (411) node {};
  \path (410) node {};
  \path (310) node {};
  \path (310)+(.5,1) node {};

  \tikzstyle{every node}=[draw, circle, inner sep=2.2pt, fill=gray, gray]
  \draw[ultra thick,solid] (816) node {}--(715) node {}--(614) node {}--(513) node {}--(412) node {}--(311)  node {};
  \draw[ultra thick,solid] (916) node {}-- (915) node {};

  \tikzstyle{every node}=[draw, circle, inner sep=2.5pt, fill=bordeaux]
  \path (511) node {};
\end{scope}
\node () [] at (8.8,-11.0) {\large $\scriptstyle (i,k)$};
\node () [] at (2.2,-10.9) {\large $\scriptstyle l_\ell \rightarrow$};
\node () [] at (4.4,-13.1) {\large $\scriptstyle (i,\ir-i)$};
\node () [] at (12.4,-15.0) {\large $\scriptstyle (\ir,k-\ir+i)$};
\node () [] at (16.9,-15.0) {\large $\scriptstyle \leftarrow l_r$};
\node (next) [inner sep=1pt] at (8.3,-17) {$\ir$};
  \draw[->,>=latex] (next)  to (916);
\tikzstyle{inter}=[-,thick]
\draw[inter] (511) to[bend left=10] (513);
\draw[inter] (511) to[bend right=10] (915);
\end{tikzpicture}
\caption{\label{fig:ladders}Examples of intersections based on left and right ladders.}
\end{figure}
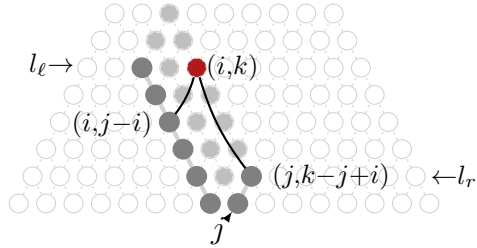

\review{Informally, the constraints $\ir-l_{\ell} \le i<\ir$ and $\ir-i < k \le \ir-i+l_r$ in Lemma~\ref{lem:intersection} define a rectangle delimited by two ladders and two lines, each of which is parallel to one of the ladders, as shown in Figure~\ref{fig:ladders}. The pairs $(i,k)$ defined by the constraints, shown in light grey in the figure, include all pairs that are strictly inside the rectangle, and all pairs on the parallel lines, but pairs on the two ladders are excluded.}

\begin{proof}[of Lemma~\ref{lem:intersection}]
By definition, $\IG{k}{i} = \cap \{G_i, G_{i+1},..., G_{i+k-1}\}$ and
$\IG{\ir-i}{i} = \cap \{G_i,\\ G_{i+1},$ $ ..., G_{\ir-1}\}$ and
$\IG{k-\ir+i}{\ir} = \cap \{G_{\ir}, G_{\ir+1},..., G_{i+k-1}\}$.
It follows that $\IG{k}{i} = \IG{\ir-i}{i} \cap \IG{k-\ir+i}{\ir}$. 
By definition, $\IG{\ir-i}{i} \in \llad{l_{\ell}}{\il}$ and $\IG{k-\ir+i}{\ir} \in \rlad{l_r}{\ir}$, so only a single binary intersection is needed.
\qed
\end{proof}

\bigskip
\boldpara{\boldmath $T$-\inter.}
We describe our optimal algorithm for this problem with reference to Figure~\ref{fig:optimal-T} which shows two examples of the execution of the algorithm (see Algorithm~\ref{algo:optimal-T} for details).
The algorithm traverses the $T^{th}$ row in the intersection hierarchy from left to right, starting at $\IG{T}{1}$. If a disconnected graph is found, the algorithm returns {\tt false} and terminates. If the algorithm reaches the last graph in the row, i.e. $\IG{T}{\delta-T+1}$, and no disconnected graph was found, then it returns {\tt true}.
The graphs $\IG{T}{1}, \IG{T}{2}, \ldots, \IG{T}{\delta-T+1}$ are computed based on the set of ladders
$\mathcal{S} = \{\llad{T}{T},$ $ \rlad{T-1}{T+1}, \llad{T}{2T}$, $\rlad{T-1}{2T+1}, \ldots\}$, which are constructed as follows. 
Each left ladder is built entirely (from bottom to top) 
when the traversal arrives at its top location in row $T$ (i.e. where the last increment is to take place). For instance, $\llad{T}{T}$ is built when the walk is at index $1$ in row $T$, $\llad{T}{2T}$ is built at index $T+1$, and so on.  If a disconnected graph is found in the process, the execution terminates returning {\tt false}.

\noindent

\begin{figure}
\centering
\begin{tikzpicture}[scale=0.33]
\begin{scope}
  \node (T) [font=\Large] at (1,-12) {$\G^T$};

   \tikzstyle{every node}=[draw, inner sep=4pt, thin, darkgray]
  \path (3,-12) node[circle] {};
   \path (8,-12) node[circle] {};
    \path (13,-12) node[circle] {};

  \tikzstyle{every node}=[draw, circle, inner sep=2pt,darkgray]
  \foreach \j in {9,...,16}{
     \foreach \i in {1,...,\j}{
       \pgfmathsetmacro{\ii}{\i + (16-\j)/2};
       \path (\ii,-\j) node (\i\j){};
     }
  }

\draw[->] (T) to (2.5,-12);

  \tikzstyle{inter}=[-]

\draw[inter] (212) to[bend left=10] (213);
\draw[inter] (212) to[bend right=10] (616);

\draw[inter] (312) to[bend left=10] (314);
\draw[inter] (312) to[bend right=10] (615);

\draw[inter] (412) to[bend left=10] (415);
\draw[inter] (412) to[bend right=10] (614);

\draw[inter] (512) to[bend left=10] (516);
\draw[inter] (512) to[bend right=10] (613);

\draw[inter] (712) to[bend left=10] (713);
\draw[inter] (712) to[bend right=10] (1116);

\draw[inter] (812) to[bend left=10] (814);
\draw[inter] (812) to[bend right=10] (1115);

\draw[inter] (912) to[bend left=10] (915);
\draw[inter] (912) to[bend right=10] (1114);

\draw[inter] (1012) to[bend left=10] (1016);
\draw[inter] (1012) to[bend right=10] (1113);

\draw[inter] (1212) to[bend left=10] (1213);
\draw[inter] (1212) to[bend right=10] (1616);

      \tikzstyle{every path}=[thin,dashed,gray,dash pattern=on 1pt off 1.5pt]
  \foreach \j in {9,...,15}{
     \foreach \i in {1,...,\j}{
       \pgfmathtruncatemacro{\jj}{\j + 1};
       \pgfmathtruncatemacro{\ii}{\i + 1};
       \draw (\i\jj) -- (\i\j);
       \draw (\ii\jj) -- (\i\j);
     }
  }
  \foreach \i in {1,...,12}{
    \pgfmathsetmacro{\ii}{\i + 2};
    \path (\ii,-12) node {};
  }

\tikzstyle{every node}=[draw, circle, inner sep=2pt, fill=gray]
  \draw[ultra thick, solid] (616) node {}--(615) node {}--(614) node {}--(613) node {};
  \draw[ultra thick, solid] (1116) node {}-- (1115) node {}-- (1114) node {}-- (1113) node {};
    \draw[ultra thick, solid] (516) node {}--(415) node {}--(314) node {}--(213) node {}--(112);
    \draw[ultra thick, solid] (1016) node {}--(915) node {}--(814) node {}--(713) node {}--(612);
    \draw[ultra thick, solid] (1516) node {}--(1415) node {}--(1314) node {}--(1213) node {}--(1112) ;
        \draw[ultra thick, solid] (1616) node {};

\xdefinecolor{bordeaux}{rgb}{0.7,0.10,0.10}

  \tikzstyle{every node}=[draw, circle, inner sep=1.9pt, fill=bordeaux]
  \path (112) node {};
  \path (212) node {};
  \path (312) node {};
  \path (412) node {};
  \path (512) node {};
  \path (612) node {};
  \path (712) node {};
  \path (812) node {};
  \path (912) node {};
  \path (1012) node {};
  \path (1112) node {};
  \path (1212) node {};

\end{scope}
\begin{scope}[xshift=17cm]

\xdefinecolor{bordeaux}{rgb}{0.7,0.10,0.10}

\node (T) [font=\Large] at (2.5,-8) {$\G^T$};

 \tikzstyle{every node}=[draw, inner sep=4pt, thin, darkgray]
  \path (5,-8) node[circle] {};

  \tikzstyle{every node}=[draw, circle, inner sep=2pt,darkgray]
  \foreach \j in {7,...,16}{
     \foreach \i in {1,...,\j}{
       \pgfmathsetmacro{\ii}{\i + (16-\j)/2};
       \path (\ii,-\j) node (\i\j){};
     }
  }

  \draw[->] (T) to (4.5,-8);

 \tikzstyle{inter}=[-]

\draw[inter] (28) to[bend left=10] (29);
\draw[inter] (28) to[bend right=10] (1016);

\draw[inter] (38) to[bend left=10] (310);
\draw[inter] (38) to[bend right=10] (1015);

\draw[inter] (48) to[bend left=10] (411);
\draw[inter] (48) to[bend right=10] (1014);

\draw[inter] (58) to[bend left=10] (512);
\draw[inter] (58) to[bend right=10] (1013);

\draw[inter] (68) to[bend left=10] (613);
\draw[inter] (68) to[bend right=10] (1012);

\draw[inter] (78) to[bend left=10] (714);
\draw[inter] (78) to[bend right=10] (1011);

\draw[inter] (88) to[bend left=10] (815);
\draw[inter] (88) to[bend right=10] (1010);

  \tikzstyle{every path}=[thin,dashed,gray,dash pattern=on 1pt off 1.5pt]
  \foreach \j in {7,...,15}{
     \foreach \i in {1,...,\j}{
       \pgfmathtruncatemacro{\jj}{\j + 1};
       \pgfmathtruncatemacro{\ii}{\i + 1};
       \draw (\i\jj) -- (\i\j);
       \draw (\ii\jj) -- (\i\j);
     }
  }

  \foreach \i in {1,...,7}{
    \pgfmathsetmacro{\ii}{\i + 4.5};
    \path (\ii,-7) node {};
  }

\tikzstyle{every node}=[draw, circle, inner sep=2pt, fill=gray]
  \draw[ultra thick, solid] (1016) node {}--(1015) node {}--(1014) node {}--(1013) node {}--(1012) node {}--(1011) node {}--(1010) node {};
  \draw[ultra thick, solid] (916) node {}-- (815) node {}-- (714) node {}-- (613) node {}-- (512) node {}-- (411) node {}-- (310) node {}-- (29) node {}-- (18);

  \tikzstyle{every node}=[draw, circle, inner sep=1.9pt, fill=bordeaux]
  \path (18) node {};
  \path (28) node {};
  \path (38) node {};
  \path (48) node {};
  \path (58) node {};
  \path (68) node {};
  \path (78) node {};
  \path (88) node {};

\end{scope}

\end{tikzpicture}
\caption{\label{fig:optimal-T}Examples of the execution of
the optimal algorithm for \tinter with $T < \delta/2$ (left) and $T \geq \delta/2$ (right). $\G$ is $T$-interval connected in both examples.}
\end{figure}
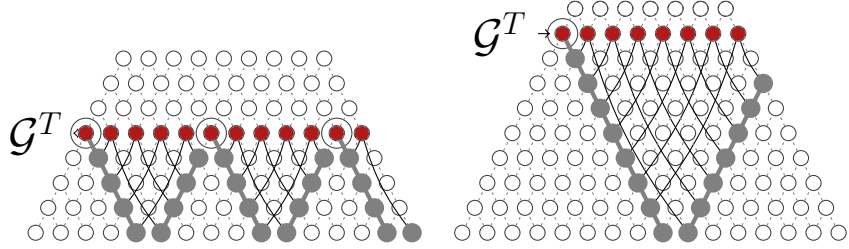

\begin{algorithm}[htb]
\footnotesize
$k \leftarrow T$ \hfill // current row (non-changing) \\
$i \leftarrow 1$ \hfill // current index in the row \\
$next \leftarrow 1$ \hfill // trigger for next ladder construction \\
\BlankLine
// walk until stepping out of the intersection hierarchy \\
\While{$i \leq \delta-k+1$}{
\eIf{$i=next$}{$next\gets i+k$\\ \If{$\neg$\texttt{computeFromRight($k,i,next$)}}{\tt return false}}{
\texttt{computeFromIntersection($k,i,next$)}\\  
}
\If{$\neg $\texttt{isConnected($\IG{k}{i}$)}}{\texttt{return false}}
$i \leftarrow i+1$\\
}
\texttt{return true}
\BlankLine
\hrule\medskip
  \SetKwBlock{Begin}{}{}

  function \texttt{computeFromRight($k,i,next$):} \hfill // compute the left ladder ${\cal L}^k[i]$
  \Begin{
    $k'\gets 1$ \hfill // row of first increment\\
    $i'\gets next-1$ \hfill // index of first increment\\
    \While{$k'<k$}{
      \If{$\neg${\tt isConnected}($\IG{k'}{i'}$)}{
        {\tt return false} \hfill // a disconnected graph was found
      }
      $k' \gets k' + 1$\\
      $i' \gets i' - 1$\\ 
     $\G^{k'}[i'] \gets \G^{k'-1}[i'+1] \cap G_{i'}$ \hfill// ``increment'' the ladder\\
    }
  }
  \BlankLine
  \hrule\medskip
  function \texttt{computeFromIntersection($k,i,next$):} \hfill // ``increment'' the right ladder
  \Begin{
    $k' \gets k-next+i$ \hfill // row of increment (right ladder) \\
    $\IG{k'}{next} \gets \IG{k'-1}{next} \cap G_{next+k'-1}$ \hfill // ``increment'' right ladder\\
    $\IG{k}{i} \gets \IG{next-i}{i} \cap \IG{k'}{next}$ \hfill // compute intersection based on Lemma~\ref{lem:intersection}\\
  }

 \caption{\label{algo:optimal-T} Optimal algorithm for \tinter}
\label{algo1}
\end{algorithm}

Differently from left ladders, right ladders are constructed gradually as the traversal proceeds. Each time that the traversal moves right to a new index in the $T^{th}$ row, the current right ladder is incremented and the new top element of this right ladder is used immediately to compute the graph at the current index in the $T^{th}$ row (using Lemma~\ref{lem:intersection}). This continues until the right ladder reaches row $T-1$ after which a new left ladder is built.

\joe{The following theorem establishes the complexity of our algorithm.}

\begin{theorem}\label{thm:T-inter-opt}
\tinter can be solved with $\Theta(\delta)$ elementary operations, which is optimal (to within a constant factor).
\end{theorem}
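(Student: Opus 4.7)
The lower bound is already given by Lemma~\ref{lem:lower-bound}, so the task is to show that Algorithm~\ref{algo:optimal-T} terminates correctly after $O(\delta)$ elementary operations. The plan is to separately bound (a) the cost of maintaining the set $\mathcal{S}$ of ladders, (b) the cost of producing the graphs $\IG{T}{i}$ from $\mathcal{S}$, and (c) the number of connectivity tests performed, and then to argue correctness by verifying that every $\IG{T}{i}$ is computed from a valid pair of ladders in $\mathcal{S}$.

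First, I would count the ladders. The algorithm starts a fresh left ladder at $i = 1, T+1, 2T+1, \ldots$ and a fresh right ladder at $i = 2, T+2, 2T+2, \ldots$, each time the walk enters a new ``block'' of $T$ consecutive indices of row $T$. Since the walk spans at most $\delta - T + 1 \le \delta$ indices, at most $\lfloor \delta / T \rfloor + 1$ left ladders and the same number of right ladders are ever begun. By construction, each ladder has length at most $T$, so Lemma~\ref{lem:ladder} implies that building $\mathcal{S}$ costs fewer than $2 \cdot (\lfloor \delta/T\rfloor + 1) \cdot (T-1) \le 2\delta$ binary intersections.

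Next, I would argue that each $\IG{T}{i}$, for $1 \le i \le \delta-T+1$, is obtained with at most two further binary intersections. When $i$ coincides with the starting index of a block, \texttt{computeFromRight} finishes the left ladder; the top of that ladder \emph{is} $\IG{T}{i}$, so no additional intersection is needed. For every other $i$, \texttt{computeFromIntersection} first extends the currently active right ladder by one step (one binary intersection) and then applies the identity of Lemma~\ref{lem:intersection} to combine the appropriate entry of the preceding left ladder with the new top of the right ladder (one more binary intersection). This contributes at most $2(\delta - T + 1) = O(\delta)$ binary intersections. Summed with (a) this yields $O(\delta)$ binary intersections overall.

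The main thing to check, and the step I expect to be slightly delicate, is that the indices $(i,k)=(i,T)$ requested in \texttt{computeFromIntersection} always lie in the ``rectangle'' to which Lemma~\ref{lem:intersection} applies for the left/right ladder pair currently stored, i.e.\ that $\ir - l_{\ell} \le i < \ir$ and $\ir - i < T \le \ir - i + l_r$. This follows because within a single block the walk index $i$ stays within $T-1$ of both the top of the current left ladder (of length $T$) and the base of the current right ladder (which has been incremented enough times to reach row $T - \ir + i$). A short bookkeeping check on the variables $next$ and $k'$ in the pseudocode confirms the bounds, so Lemma~\ref{lem:intersection} is indeed applicable and the computed graph equals $\IG{T}{i}$.

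Finally, for (c), the main loop performs one connectivity test per index of row $T$ (at most $\delta - T + 1$), and \texttt{computeFromRight} performs at most $T-1$ connectivity tests per left ladder, which over all $\lfloor \delta/T \rfloor + 1$ left ladders totals $O(\delta)$. Combining (a), (b), (c) gives a total of $O(\delta)$ elementary operations, matching the $\Omega(\delta)$ bound of Lemma~\ref{lem:lower-bound} and establishing the theorem. Correctness of the \texttt{true}/\texttt{false} answer follows from Observation~\ref{obs:row-connected}: the algorithm returns \texttt{false} iff some $\IG{k'}{i'}$ with $k' \le T$ is disconnected, and any such disconnection on row $T$ directly witnesses non-$T$-interval-connectivity, while a disconnection on a lower row $k' < T$ implies by monotonicity of intersection that the corresponding $\IG{T}{\cdot}$ above it is also disconnected.
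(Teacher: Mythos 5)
Your proposal is correct and follows essentially the same route as the paper: bound the total length of the $O(\delta/T)$ left and right ladders via Lemma~\ref{lem:ladder}, charge one (or, in your bookkeeping, two) extra intersections per entry of row $T$ via Lemma~\ref{lem:intersection}, count $O(\delta)$ connectivity tests, and invoke Lemma~\ref{lem:lower-bound} for optimality; your added checks of the precondition of Lemma~\ref{lem:intersection} and of why a disconnected ladder element justifies returning \texttt{false} are details the paper leaves implicit. The only blemish is the claimed numerical bound $2(\lfloor\delta/T\rfloor+1)(T-1)\le 2\delta$, which is not literally true (it can reach roughly $4\delta$), but this affects only the constant and not the $\Theta(\delta)$ conclusion.
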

\begin{proof}
\review{
The set $\mathcal{S}$ of ladders constructed by the algorithm
includes at most $\lceil\delta/T\rceil$ left ladders and $\lceil\delta/T\rceil$ right ladders, each of length at most $T$. By Lemma~\ref{lem:ladder}, the set of ladders $\mathcal{S}$ can be computed using less than $2\delta$ binary intersections. Based on Lemma~\ref{lem:intersection}, each of the $\delta-T+1$ graphs $\IG{T}{i}$ in row $T$ can be computed at the cost of a single intersection of two graphs in $\mathcal{S}$. At most $\delta-T+1$ connectivity tests are performed for row $T$, which concludes the proof.
\qed
}
\end{proof}

\boldpara{\inter.}
The strategy of our optimal algorithm for this problem is in the same spirit as the one for \tinter. However, it is more complex and corresponds to a walk in the two dimensions of the intersection hierarchy. It is best understood with reference to Figure~\ref{fig:walk} which shows an example of the execution of the algorithm (see Algorithm~\ref{algo:optimal_interval} for details).

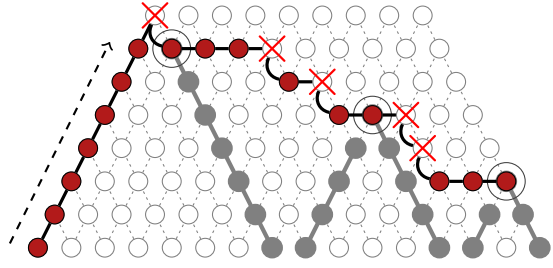
\begin{figure}
  \centering
\xdefinecolor{bordeaux}{rgb}{0.7,0.10,0.10}
\begin{tikzpicture}[scale=.44]
\tikzstyle{direction}=[->,thick,dashed]
\node (l1) [] at (0,-16.2) {};
\node (l8) [] at (4,-9) {};
\draw[direction] (l1) to[] (3.15,-9.8);

  \tikzstyle{every node}=[draw, circle, inner sep=2.5pt,gray]
  \foreach \j in {9,...,16}{
     \foreach \i in {1,...,\j}{
       \pgfmathsetmacro{\ii}{\i + (16-\j)/2};
       \path (\ii,-\j) node (\i\j){};
     }
  }
  \tikzstyle{every path}=[thin,dashed,gray,dash pattern=on 1pt off 1.5pt]
  \foreach \j in {9,...,15}{
     \foreach \i in {1,...,\j}{
       \pgfmathtruncatemacro{\jj}{\j + 1};
       \pgfmathtruncatemacro{\ii}{\i + 1};
       \draw (\i\jj) -- (\i\j);
       \draw (\ii\jj) -- (\i\j);
     }
  }
  \tikzstyle{every node}=[font=\LARGE, red]
  \path (19) node {$\times$};
  \path (510) node {$\times$};
  \path (711) node {$\times$};
  \path (1012) node {$\times$};
  \path (1113) node {$\times$};
  \tikzstyle{every node}=[draw, circle, inner sep=2.3pt, fill=gray]
  \draw[ultra thick, solid] (816) node {}--(715) node {}--(614) node {}--(513) node {}--(412) node {}--(311) node {}--(210) node {};
  \draw[ultra thick, solid] (916) node {}-- (915) node {}-- (914) node {}-- (913) node {};
  \draw[ultra thick, solid] (1316) node {}--(1215) node {}-- (1114) node {}--(1013) node {}-- (912) node {};
  \draw[ultra thick, solid] (1416) node {}-- (1415) node {};
  \draw[ultra thick, solid] (1414) node {}--(1515) node {}-- (1616) node {};
  \path (1616) node {};

  \tikzstyle{every path}=[]
  \tikzstyle{every node}=[draw, inner sep=5pt, thin, darkgray]
  \path (210) node[circle] {};
  \path (912) node[circle] {};
  \path (1414) node[circle] {};
  \tikzstyle{every node}=[draw, circle, inner sep=2.5pt, fill=bordeaux]
  \path (116) node {};
  \path (115) node {};
  \path (114) node {};
  \path (113) node {};
  \path (112) node {};
  \path (111) node {};
  \path (110) node {};
  \path (210) node {};
  \path (310) node {};
  \path (410) node {};
  \path (611) node {};
  \path (812) node {};
  \path (912) node {};
  \path (1214) node {};
  \path (1314) node {};
  \path (1414) node {};

  \tikzstyle{every path}=[very thick,rounded corners=6pt]
  \draw (116)--(115)--(114)--(113)--(112)--(111)--(110)--(19);
  \draw (19)--(110.north east)--(210)--(310);
  \draw (310)--(410)--(510)--(511.north east)--(611);
  \draw (611)--(711)--(712.north east)--(812);
  \draw (812)--(912)--(1012)--(1013.north east)--(1113);
  \draw (1113)--(1114.north east)--(1214);
  \draw (1214)--(1314)--(1414);

\end{tikzpicture}
\caption{\label{fig:walk}Example of the execution of
the optimal algorithm for \inter. {\it (It is a coincidence that the rightmost ladder matches the outer face.)}}
\end{figure}

\begin{algorithm}
\footnotesize

$k \leftarrow 1$ \hfill // current row\\
$i \leftarrow 1$ \hfill // current index in the row\\
$next \leftarrow 2$ \hfill // trigger for next ladder construction\medskip\\
// builds a right ladder until a disconnected graph is found\\
\While{ \texttt{isConnected($\IG{k}{1}$)} }{
  $k\gets k+1$\\
  \eIf{$k > \delta$}{
    {\tt return} $\delta$ \hfill // the graph is $\delta$-interval connected
  }{
    $\IG{k}{1} \leftarrow \IG{k-1}{1} \cap G_{k}$ \hfill // ``increment'' the right ladder
  }
}
\If{$k=1$}{{\tt return} $0$ \hfill // the graph is $0$-interval connected}\medskip
$k\gets k-1$ \hfill // move down\\
$i\gets i+1$ \hfill // move right\\

\BlankLine
\BlankLine

// walk until stepping out of the hierarchy\\
\While{$i \leq \delta-k+1$}{
  \eIf{$i=next$}{
    $next \gets i+k$\\
    {\tt computeFromRight($k,i,next$)}\\
  }{
    {\tt computeFromIntersection($k,i,next$)}\\
    \If {$\neg$\texttt{isConnected($\IG{k}{i}$)}}{
      $k\gets k-1$\\
    }
  }
  \If {$k=0$}{
    {\tt return 0}
  }
    $i\gets i+1$\\
  }
  \texttt{return} $k$
  \hrule\medskip
  \SetKwBlock{Begin}{}{}

  function \texttt{computeFromRight($k,i,next$):} \hfill // compute the left ladder ${\cal L}^k[i]$
  \Begin{
    $k'\gets 1$ \hfill // row of first increment\\
    $i'\gets next-1$ \hfill // index of first increment\\
    \While{$k'<k$}{
      \If{$\neg${\tt isConnected}($\IG{k'}{i'}$)}{
        $k \gets k'-1$ \hfill // move the original walk..\\
        $i \gets i'+1$ \hfill // ..below-right disconnected graph,\\
        {\tt return} \hfill // abort function
      }
      $k' \gets k' + 1$\\
      $i' \gets i' - 1$\\
      $\G^{k'}[i'] \gets \G^{k'-1}[i'+1] \cap G_{i'}$ \hfill // ``increment'' the ladder
    }
  }
  \hrule\medskip
  \SetKwBlock{Begin}{}{}
  function \texttt{computeFromIntersection($k,i,next$):} \hfill (Same function as for Algorithm~\ref{algo:optimal-T})
  \Begin{
    $k' \gets k-next+i$ \hfill // row of increment (right ladder)\\
    $\IG{k'}{next} \gets \IG{k'-1}{next} \cap G_{next+k'-1}$ \hfill // ``increment'' right ladder\\
    $\IG{k}{i} \gets \IG{next-i}{i} \cap \IG{k'}{next}$ \hfill // compute intersection based on Lemma~\ref{lem:intersection}
   }

 \caption{\label{algo:optimal_interval} Optimal algorithm for \inter}
\end{algorithm}

The walk starts at the bottom left graph $\IG{1}{1}$ and builds a right ladder incrementally until it encounters a disconnected graph. If $\IG{\delta}{1}$ is reached and is connected, then $\G$ is $\delta$-interval connected and execution terminates returning $\delta$. Otherwise, suppose that a disconnected graph is first found in row $k+1$. Then $k$ is an upper bound on the connectivity of $\G$ and the walk drops down a level to $\IG{k}{2}$ which is the next graph in row $k$ that needs to be checked. This requires the construction of a left ladder $\llad{k}{k+1}$ of length $k$ ending at $\IG{k}{2}$. The walk proceeds rightward on row $k$ using a similar traversal strategy to 
the algorithm for \tinter. Here, however, every time that a disconnected graph is found, the walk drops down by one row. The dropping down operation, say, from some $\IG{k}{i}$, is made in two steps (curved line in Figure~\ref{fig:walk}). First it goes to $\IG{k-1}{i}$, which is necessarily connected because $\IG{k}{i-1}$ is connected (so a connectivity test is not needed here), and then it moves one unit right to $\IG{k-1}{i+1}$. If the walk eventually reaches the rightmost graph of some row and this graph is connected, then the algorithm terminates returning the corresponding row number as $T$. Otherwise the walk will terminate at a disconnected graph in row 1 and $\G$ is not $T$-interval connected for any $T$. In this case, the algorithm returns $T=0$.

Similarly to the algorithm for \tinter,
 the computations of the graphs in a walk by Algorithm~\ref{algo:optimal_interval} (for \inter)
use binary intersections based on Lemmas~\ref{lem:ladder} and~\ref{lem:intersection}. If the algorithm returns that $\G$ is $T$-interval connected, then each graph $\IG{T}{1}, \IG{T}{2}, \ldots, \IG{T}{\delta-T+1}$ must be connected. The graphs that are on the walk are checked directly by the algorithm. For each graph $\IG{T}{i}$ on row $T$ that is below the walk, there is a graph
$\IG{j}{i}$ with $j>T$
that is on the walk and is connected and this implies that $\IG{T}{i}$ is connected.

\begin{theorem}\label{thm:inter-opt}
\inter can be solved with $\Theta(\delta)$ elementary operations, which is optimal (up to a constant factor).
\end{theorem}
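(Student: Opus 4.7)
The plan is to establish Theorem~\ref{thm:inter-opt} by matching lower and upper bounds. The lower bound $\Omega(\delta)$ follows by adapting the indistinguishability argument of Lemma~\ref{lem:lower-bound} to \inter, as already sketched immediately after that lemma: if an algorithm uses only $o(\delta)$ elementary operations then some $G_k$ is never accessed, and one can exhibit two sequences that agree everywhere except at $G_k$ and whose maximum $T$ values differ, so no such algorithm can answer correctly on both inputs. For the upper bound I would analyse Algorithm~\ref{algo:optimal_interval} directly, first proving correctness and then the $O(\delta)$ elementary-operation count.

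For correctness I would lean on a single monotonicity observation: whenever the index range $[i',i'+k'-1]$ is contained in $[i,i+k-1]$, the graph $\IG{k'}{i'}$ is a spanning supergraph of $\IG{k}{i}$ since fewer factors appear in the intersection; hence connectivity of $\IG{k}{i}$ implies connectivity of $\IG{k'}{i'}$. This underpins two aspects of the algorithm. First, each diagonal descent from $\IG{k}{i}$ to $\IG{k-1}{i+1}$ passes through $\IG{k-1}{i}$, which is a spanning supergraph of the connected $\IG{k}{i-1}$ tested one walk step earlier, so the skipped connectivity test at $\IG{k-1}{i}$ is justified. Second, if the algorithm returns $T$, then for every index $i\in[1,\delta-T+1]$ the walk passed through some $\IG{j}{i}$ with $j\ge T$ that was tested connected; since $[i,i+T-1]\subseteq[i,i+j-1]$, the graph $\IG{T}{i}$ is a supergraph of $\IG{j}{i}$ and hence is connected, certifying $T$-interval connectivity. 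Conversely the walk descends exactly once per witnessed disconnected graph in row $k+1$, ruling out $(k+1)$-interval connectivity, so the returned value is exactly $\max\{T:\G\text{ is }T\text{-interval connected}\}$.

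For the complexity I would charge elementary operations to the ladders constructed by the algorithm and to the main walk. The left ladders built inside \texttt{computeFromRight} have pairwise disjoint index ranges -- visible from the fact that the variable $next$ is advanced by exactly $k$ each time a new left ladder is started -- so their combined length is at most $\delta$. By Lemma~\ref{lem:ladder} they account for at most $\delta$ intersections and at most $\delta$ connectivity tests. The initial right ladder at index $1$ has length at most $\delta$; each subsequent right ladder is rooted where a left ladder ended and has length bounded by the current row number $k$, which is itself bounded by the length of that preceding left ladder. A telescoping argument therefore bounds the total length of all right ladders by $O(\delta)$. Every step of the main walk then triggers either a left-ladder increment, a right-ladder increment, or a single binary intersection via Lemma~\ref{lem:intersection}, together with at most one connectivity test, for a grand total of $O(\delta)$ elementary operations.

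The main obstacle I foresee is the bookkeeping in the correctness proof: one must verify that whenever \texttt{computeFromIntersection} is invoked, the currently maintained left ladder $\llad{l_\ell}{next-1}$ and right ladder $\rlad{l_r}{next}$ are long enough to satisfy the hypotheses of Lemma~\ref{lem:intersection} for the requested pair $(i,k)$ -- both during lateral walking and after the diagonal descents triggered by disconnections, which perturb $k$, $i$, and $next$ simultaneously. A loop invariant tying these three variables to the lengths of the active ladders (essentially: the left ladder always reaches the current row and the right ladder always reaches row $k-(next-i)$) should carry this through, after which the accounting above completes the proof.
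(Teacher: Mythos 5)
Your proposal follows essentially the same route as the paper: the lower bound via the indistinguishability argument after Lemma~\ref{lem:lower-bound}, and the upper bound by analysing Algorithm~\ref{algo:optimal_interval} using the same three ingredients the paper uses --- the supergraph/monotonicity observation to justify the untested step of each diagonal descent and to certify row-$T$ graphs lying below the walk, the disjointness of the left ladders' index ranges, and the bound on each right ladder by the preceding left ladder. The loop-invariant bookkeeping you flag as the main obstacle is indeed left implicit in the paper's own proof as well, so your write-up matches the paper's level of rigour.
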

\begin{proof}
\review{
The ranges of the indices covered by the left ladders that are constructed by the process are disjoint, so their total length is $O(\delta$). The first right ladder has length at most $\delta$ and each subsequent right ladder has length less than the left ladder that precedes it so the total length of the right ladders is also $O(\delta)$. Therefore, 
this algorithm  performs $O(\delta)$ binary intersections and $O(\delta$) connectivity tests. This establishes that {\sc \inter} can be solved with $\Theta(\delta)$ elementary operations matching the lower bound of Lemma~\ref{lem:lower-bound}. \hfill \qed
}
\end{proof}

\FloatBarrier
\boldpara{Online Algorithms.}
The optimal algorithms for \tinter and\linebreak
\inter can be adapted to an online setting in which the sequence of graphs $G_1,G_2,G_3,\ldots$ of a dynamic graph $\G$ is processed in the order that the graphs are received.  In the case of \tinter, the algorithm cannot provide an answer until at least $T$ graphs have been received. When the $T^{th}$ graph is received, the algorithm builds the first left ladder using $T-1$ binary intersections. It can then perform a connectivity test and answer whether or not the sequence is $T$-interval connected so far. After this initial period, a $T$-connectivity test can be performed for the $T$ most recently received graphs (by performing a connectivity test on the corresponding graph in row $T$) after the receipt of each new graph.

\begin{theorem}\label{thm:online}
\tinter and \inter can be solved online with an amortized cost of $\Theta(1)$ elementary operations per graph received.
\end{theorem}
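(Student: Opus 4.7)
The plan is to run the offline algorithms in streaming mode, processing each $G_i$ upon arrival and accumulating total work that matches the offline bounds of Theorems~\ref{thm:T-inter-opt} and~\ref{thm:inter-opt}. Since the offline algorithms use $\Theta(\delta)$ elementary operations on a length-$\delta$ sequence, such a streaming reorganization yields $O(1)$ amortized operations per arrival; the matching $\Omega(1)$ lower bound follows by applying Lemma~\ref{lem:lower-bound} to each prefix, since otherwise an adversary could substitute an unaccessed graph to flip the answer. The main obstacle will be the burstiness of the work that must be done each time a new left ladder is constructed, which I will absorb by standard aggregate amortization.

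For \tinter, I would pipeline the construction of the ladder set $\mathcal{S}$ from Algorithm~\ref{algo:optimal-T} as follows. The first $T-1$ arrivals are merely buffered. When $G_T$ arrives, the left ladder $\llad{T}{T}$ is built in a single burst of $T-1$ intersections (Lemma~\ref{lem:ladder}), yielding $\IG{T}{1}$, on which a connectivity test is performed. Thereafter, each new arrival $G_i$ either increments the active right ladder by one level and forms $\IG{T}{i-T+1}$ via one intersection (Lemma~\ref{lem:intersection}), or, when $i$ is a left-ladder trigger, triggers a burst of $T-1$ intersections to build a new $\llad{T}{jT}$; in either case the new row-$T$ entry is then tested for connectivity. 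To amortize, I would use an aggregate argument over windows of $T$ consecutive arrivals: each window contains exactly one left-ladder burst of cost $O(T)$, at most $T-1$ right-ladder increments, and $T$ row-$T$ intersections and tests, for a total of $O(T)$ operations across $T$ arrivals, which is $O(1)$ per arrival.

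For \inter, I would adapt Algorithm~\ref{algo:optimal_interval} in the same spirit, executing one step of its two-dimensional walk per arrival. The walk's horizontal index grows by exactly one per step, and the current row $k$ either stays fixed or drops (possibly across multiple rows, when a newly built left ladder exposes a chain of disconnected entries). To absorb the left-ladder bursts and multi-row drops, I would reuse the disjointness accounting from the proof of Theorem~\ref{thm:inter-opt}: the horizontal index ranges of distinct left ladders are pairwise disjoint and therefore sum to at most $\delta$; each right ladder is strictly shorter than the preceding left ladder, so the right-ladder lengths also sum to $O(\delta)$; and the along-the-walk intersections and connectivity tests add another $O(\delta)$. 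Thus the cumulative cost after any $n$ arrivals is $O(n)$, giving the claimed $\Theta(1)$ amortized cost per arrival, with no initial wait for \inter since the walk can begin and report its current answer from the first graph received.
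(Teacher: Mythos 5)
Your proposal is correct and follows essentially the same route as the paper: buffer the first $T-1$ graphs, build the initial left ladder in a burst upon the arrival of $G_T$, then charge each subsequent left-ladder burst and right-ladder increment against the graphs received, so that the total number of ladder intersections never exceeds the number of arrivals. Your window-of-$T$ aggregate accounting for \tinter and your reuse of the disjoint-ladder accounting from Theorem~\ref{thm:inter-opt} for \inter are just slightly more explicit phrasings of the amortization the paper itself uses.
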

\begin{proof}
\review{
At no time \joe{during the execution of the algorithm for \tinter}  does the number of intersections performed to build left ladders exceed the number of graphs received and the same is true for right ladders. Furthermore, each new graph after the first $T-1$ graphs corresponds to a graph in row $T$ which can be computed with one intersection by Lemma~\ref{lem:intersection}.  In summary, the amortized cost is $O(1)$ elementary operations for each graph received and for each $T$-connectivity test after the initial period. The analysis for {\sc \inter} is similar except that the algorithm can report the connectedness of the sequence received so far, starting with the first graph.
 \qed
}
\end{proof}
\section{Dynamic Online Interval Connectivity}
\label{sec:dynamic}

The algorithms in this section are motivated by Internet protocols like TCP (Transmission Control Protocol) which
adjust their behaviour dynamically in response to recent network events and conditions such as dropped packets and congestion. $T$-interval connectivity is a measure of the stability of a network. Generally, larger values of $T$ indicate that communication is more reliable, so it is natural to consider a dynamic version of interval connectivity that is based only on the recent states of a network rather than the entire history of a network.
We formalize this notion of recent history by introducing the concept of {\em $T$-stable} graphs. We then define the dynamic online versions of both \tinter and \inter in terms of $T$-stable graphs.

\begin{definition}[\boldmath $T$-stable graph]
A graph $G_i$, $i\geq T$, of a sequence $\G=(G_1,G_2,...,G_\delta)$ is {\em $T$-stable} for a given $T$ iff the subsequence $ G_{i-(T-1)}, G_{i-(T-2)}, \ldots , G_{i-1}, G_i$ is $T$-interval connected.
\end{definition}

\begin{definition}[Testing \boldmath \tstable]
The \tstable problem for a given $T$ is the problem of deciding for each received graph $G_i$, $i \geq T$, whether $G_i$ is $T$-stable.
\end{definition}

\begin{definition}[\boldmath \stable]
We use the term \stable to refer to the problem of finding $T_i =\max\{T :$ $G_i$ is $T$-stable$\}$ for each received graph $G_i$.
\end{definition}

As before, the first problem is a decision problem with true/false output, while the second is a maximization problem with integer output. Here, however, one such output is required after each graph in the sequence is received.

\medskip
\boldpara{\boldmath \tstable.} Our algorithm for \tstable is similar to Algorithm~\ref{algo:optimal-T} for \tinter. The differences are that the algorithm for \tstable produces an output after each graph of a sequence is received, and the algorithm does not terminate if a disconnected graph is found on row $T$ of the hierarchy. Instead, it continues until the last graph in the sequence is received. The ladders constructed by the algorithm for \tstable are the same as the ladders that would be constructed by Algorithm~\ref{algo:optimal-T} for a dynamic graph that is $T$-interval connected (see Figure~\ref{fig:optimal-T} for examples). Given a dynamic graph $\G=(G_1,G_2,...,G_\delta)$, \tstable is undefined for the graphs $G_i$ with $i<T$, so the algorithm returns $\bot$ after each of the first $T-1$ graphs is received. When $G_T$ is received, the algorithm builds a left ladder and returns {\tt true} (resp. {\tt false}) if the top graph of the ladder (i.e. $\G^T[1]$) is connected (resp. disconnected). Then the walk progresses rightward along row $T$ every time that a graph is received, alternately building left and right ladders in such a way that the graph $\G^T[i-(T-1)]$ can always be computed from $G_i$ with a single intersection (using Lemma~\ref{lem:intersection}). $G_i$ is $T$-stable iff $\G^T[i-(T-1)]$ is connected and {\tt true} or {\tt false} is output as appropriate.

\begin{theorem}\label{thm:tstable}
\tstable can be solved online with an amortized cost of $\Theta(1)$ elementary operations per graph received.
\end{theorem}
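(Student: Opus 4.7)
The plan is to establish both a matching lower bound and upper bound. The $\Omega(1)$ lower bound is immediate because the algorithm must emit an output ($\bot$, {\tt true}, or {\tt false}) for every received graph, requiring at least one elementary operation per graph. For the $O(1)$ amortized upper bound, I would count the total number of elementary operations performed over the first $n$ received graphs and show this total is $O(n)$.

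I would partition the elementary operations into four categories and bound each separately: (i) intersections used to construct left ladders, (ii) intersections used to incrementally extend right ladders, (iii) intersections used to compute the current row-$T$ entry $\IG{T}{i-T+1}$ via Lemma~\ref{lem:intersection}, and (iv) connectivity tests. For (i), by the structure inherited from Algorithm~\ref{algo:optimal-T}, a complete left ladder of length at most $T$ is constructed exactly when the walk reaches an index of the form $1, T+1, 2T+1, \ldots$ in row $T$, i.e. once every $T$ received graphs; by Lemma~\ref{lem:ladder} each such construction costs at most $T-1$ intersections, so the total over $n$ received graphs is bounded by $\lfloor n/T \rfloor (T-1) < n$. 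For (ii), between two consecutive left-ladder triggers the right ladder is incremented by one level per received graph, contributing at most $n$ intersections in total. For (iii), each received graph $G_i$ with $i \geq T$ that does not coincide with a left-ladder trigger yields $\IG{T}{i-T+1}$ through a single binary intersection of already-computed ladder entries, by Lemma~\ref{lem:intersection}, contributing at most $n$ further intersections. For (iv), \tstable requires only one connectivity test per received graph starting from $G_T$ --- namely on $\IG{T}{i-T+1}$ --- totalling at most $n - T + 1$ tests.

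Summing the four contributions gives $O(n)$ elementary operations over $n$ received graphs, i.e.\ $\Theta(1)$ amortized per graph, matching the lower bound. The main obstacle I anticipate is not the arithmetic but verifying the structural invariant that supports (iii): whenever the walk is at index $i$ in row $T$, the currently stored left ladder (anchored at the most recent trigger) and the currently stored right ladder jointly satisfy the hypotheses of Lemma~\ref{lem:intersection} so that $\IG{T}{i-T+1}$ really is recoverable by a single intersection. This reduces to a straightforward induction on the trigger indices, exactly mirroring the analysis underlying Algorithm~\ref{algo:optimal-T}, using the facts that consecutive left ladders are $T$ positions apart and that each right ladder is grown to exactly the height needed to cover the gap up to the next trigger. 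Once this invariant is established, the four category bounds combine to yield the theorem.
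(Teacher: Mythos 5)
Your proposal follows essentially the same route as the paper: reuse the ladder structure of Algorithm~\ref{algo:optimal-T} (minus early termination), and amortize the left-ladder, right-ladder, Lemma~\ref{lem:intersection}, and connectivity-test costs to $O(1)$ per received graph, with the structural invariant you flag being exactly the one the paper discharges by appeal to the analysis of the online version of Algorithm~\ref{algo:optimal-T}. The only nit is your lower-bound justification: emitting an output is not itself an elementary operation, so the $\Omega(1)$ amortized bound should instead come from the adversary argument of Lemma~\ref{lem:lower-bound} (every $G_i$ must be touched by some intersection or connectivity test, else its stability answer could be flipped) --- a one-line fix.
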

\begin{proof}
\review{
By the same analysis as for the online version of Algorithm~\ref{algo:optimal-T}, the number of intersections performed to build left ladders never exceeds the number of graphs received, and the same is true for the number of intersections to build right ladders and for the number of connectivity tests. \qed
}
\end{proof}

\medskip
\boldpara{\boldmath \stable.} The algorithm for this problem must find $T_i=\max\{T :$ $G_i$ is $T$-stable$\}$ for each received graph $G_i$.
Our algorithm for \stable generalizes the strategy that we used in the algorithm for \inter by trying to climb as high as possible in the hierarchy, even after a disconnected intersection graph is found. This is necessary because the sequence of values $T_1,T_{2},T_{3},\ldots$ for \stable is not necessarily monotonic.

The algorithm for \stable uses right and left ladders
to walk through the intersection hierarchy. The general idea is that the walk goes up when the current intersection graph is connected and down when it is disconnected (unless the walk is on the bottom level of the hierarchy in which case it goes right to the next graph). This is different from the algorithm for \inter which only goes up during the construction of the first right ladder and goes right or down in all other cases. We will describe the algorithm for \stable with reference to Figure~\ref{fig:window_walk} which shows an example of the execution of the algorithm. See Algorithm~\ref{algo:optimal_interval_window} for complete details.

\begin{figure}
  \centering
\xdefinecolor{bordeaux}{rgb}{0.7,0.10,0.10}
\begin{tikzpicture}[scale=.44]
\draw[->,thick] (14.7,-16.9) to[] (14.2,-16.4);
\draw[->,thick] (14.5,-11) to[] (14.0,-11);
  \node (T) [font=\footnotesize] at (17,-17.2) {$current$ $graph$ $G_{14}$};
    \node (T) [font=\small] at (15.2,-11) {$T_{14}$};

\tikzstyle{direction}=[->,thick,dashed]
\node (l1) [] at (0,-16.2) {};
\node (l8) [] at (4,-9) {};
\draw[direction] (l1) to[] (3.15,-9.8);

  \tikzstyle{every node}=[draw, circle, inner sep=2.5pt,gray]
  \foreach \j in {9,...,16}{
     \foreach \i in {1,...,\j}{
       \pgfmathsetmacro{\ii}{\i + (16-\j)/2};
       \path (\ii,-\j) node (\i\j){};
     }
  }
  \tikzstyle{every path}=[thin,dashed,gray,dash pattern=on 1pt off 1.5pt]
  \foreach \j in {9,...,15}{
     \foreach \i in {1,...,\j}{
       \pgfmathtruncatemacro{\jj}{\j + 1};
       \pgfmathtruncatemacro{\ii}{\i + 1};
       \draw (\i\jj) -- (\i\j);
       \draw (\ii\jj) -- (\i\j);
     }
  }

  \tikzstyle{every node}=[draw, circle, inner sep=2.3pt, fill=gray]
  \draw[ultra thick, solid] (616) node {}--(515) node {}--(414) node {}--(313) node {}--(212) node {};
  \draw[ultra thick, solid] (716) node {}-- (715) node {}-- (714) node {}-- (713) node {}-- (712) node {};
    \draw[ultra thick, solid] (1116) node {}--(1015) node {}--(914) node {} --(813) node {};
    \draw[ultra thick, solid] (1216) node {}--(1215) node {}--(1214) node {};

  \tikzstyle{every node}=[font=\LARGE, red]
  \path (29) node {$\times$};
  \path (111) node {$\times$};
   \path (310) node {$\times$};
   \path (411) node {$\times$};
  \path (712) node {$\times$};
    \path (813) node {$\times$};

  \tikzstyle{every path}=[]
  \tikzstyle{every node}=[draw, inner sep=5pt, thin, darkgray]
  \path (212) node[circle] {};
    \path (813) node[circle] {};

  \tikzstyle{every node}=[draw, circle, inner sep=2.5pt, fill=bordeaux]
  \path (116) node {};
  \path (115) node {};
  \path (114) node {};
  \path (113) node {};
  \path (112) node {};
  \path (212) node {};
  \path (211) node {};
  \path (210) node {};
  \path (512) node {};
  \path (511) node {};
  \path (914) node {};
  \path (913) node {};
  \path (912) node {};
  \path (911) node {};

  \tikzstyle{every path}=[very thick,rounded corners=6pt]
  \draw (116)--(115)--(114)--(113)--(112)--(111)--(112.north east)--(212)--(211)--(210)--(29)--(210.north east)--(310)--(311.north east)--(411)--(412.north east)--(512)--(511);
  \draw (712)--(713.north east)--(813)--(814.north east)--(914)--(913)--(912)--(911);

\end{tikzpicture}
\caption{\label{fig:window_walk}Example of the execution of the \stable algorithm.}
\end{figure}
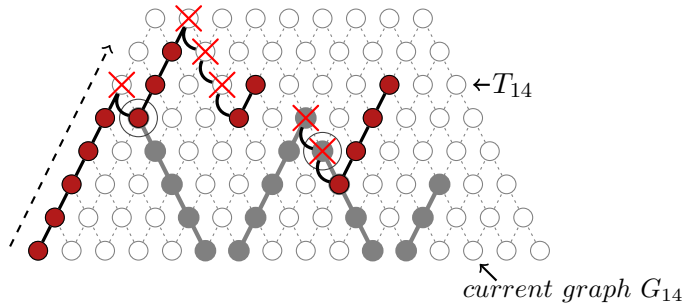

\begin{algorithm}
\footnotesize
$i \leftarrow 1$ \hfill // current index in the row\\
$next \leftarrow 2$ \hfill // trigger for next ladder construction\\
$output \leftarrow 0$ \medskip\\
\BlankLine
\BlankLine

\While{receiving graphs}{

\While{ \texttt{isConnected($\IG{k}{i}$)} }{ 
$output \gets k$; 
$k\gets k+1$\\
    {\tt computeFromIntersection($k,i,next$)} \hfill // ``increment'' the right ladder\\
}

\While{ \texttt{$\neg$isConnected($\IG{k}{i}$)} }{  
\eIf{$k=1$}{$output \gets 0$; $next \gets i+2$}{$k \gets k-1$\\} 
      $i\gets i+1$\\
  \eIf{$i=next$}{
    $next \gets i+k$; 
    {\tt computeFromRight($k,i,next$)}\\
  }{
    {\tt computeFromIntersection($k,i,next$)}\\
       
  }
  }
  }
  
  \hrule\medskip
  \SetKwBlock{Begin}{}{}

  function \texttt{computeFromRight($k,i,next$):} \hfill // compute the left ladder ${\cal L}^k[i]$
  \Begin{
    $k'\gets 1$ \hfill // row of first increment\\
    $i'\gets next-1$ \hfill // index of first increment\\
    \While{$k'<k$}{
      \If{$\neg${\tt isConnected}($\IG{k'}{i'}$)}{
        $k \gets k'-1$ \hfill // move the original walk..\\
        $i \gets i'+1$ \hfill // ..below-right disconnected graph,\\
        {\tt return} \hfill // abort function
      }
      $k' \gets k' + 1$; 
      $i' \gets i' - 1$\\
      $\G^{k'}[i'] \gets \G^{k'-1}[i'+1] \cap G_{i'}$ \hfill // ``increment'' the ladder
    }
  }
  \hrule\medskip
  \SetKwBlock{Begin}{}{}
  function \texttt{computeFromIntersection($k,i,next$):} 
  \Begin{
  \eIf{$i=next-1$}{$\IG{k}{i} \leftarrow \IG{k-1}{i} \cap G_{i+k-1}$}{
    $k' \gets k-next+i$ \hfill // row of increment (right ladder)\\
    $\IG{k'}{next} \gets \IG{k'-1}{next} \cap G_{next+k'-1}$ \hfill // ``increment'' right ladder\\
    \eIf{$\neg${\tt isConnected}($\IG{k'}{next}$)}{$i \gets next$; $k \gets k'$}{
    $\IG{k}{i} \gets \IG{next-i}{i} \cap \IG{k'}{next}$  \hfill // compute intersection based on Lemma~\ref{lem:intersection}}
    }
   }

 \caption{\label{algo:optimal_interval_window} Optimal algorithm for \stable}
\end{algorithm}

The walk begins by constructing a right ladder. In each step, if a computed intersection graph $\IG{k}{j}$ is connected, and $G_i$, $i=j+k-1$, is the most recently received graph, then the value $k$ is returned to indicate that $G_{i}$ is $k$-stable. Then the walk climbs one row in the hierarchy to $\IG{k+1}{j}$ which takes into consideration the next graph $G_{i+1}$. If a computed intersection graph $\IG{k}{j}$, $k > 1$, is disconnected, then the walk descends to the next graph in the row below, i.e. to $\IG{k-1}{j+1}$. In this case no value is returned because the next graph in $\G$ has not yet been considered. If a graph $\IG{1}{j}$ is disconnected, then $0$ is returned, and the walk moves right to the next graph.

As in the previous algorithms, the right ladders are constructed incrementally as the walk goes up, even though each graph $\IG{k}{j}$ can be computed from $\IG{k-1}{j} \cap G_{j+k-1}$, because this prepares the ladders needed to compute the intersection graphs if the walk goes down. This is illustrated by the second right ladder $\rlad{5}{7}$ in Figure~\ref{fig:window_walk}. If a disconnected graph $\IG{k}{j}$ is found while building a right ladder, the walk jumps to the next graph in the row just below, i.e. to $\IG{k-1}{j+1}$, to avoid unnecessary computations. For example, in Figure~\ref{fig:window_walk} the walk jumps from $\IG{5}{7}$ which is disconnected to $\IG{4}{8}$ without computing $\IG{7}{5}$ and $\IG{6}{6}$.

If a graph $\IG{k}{j}$ cannot be computed using the current ladders, then a complete new left ladder $\llad{k}{k+j-1}$ is constructed as high as possible until it reaches a previously computed graph or until it encounters a disconnected graph. The former case is illustrated by the left ladder $\llad{5}{6}$ in Figure~\ref{fig:window_walk} which is built when the walk descends from $\IG{6}{1}$ to $\IG{5}{2}$. The latter case is illustrated by the left ladder $\llad{4}{11}$ which encounters the disconnected graph $\IG{4}{8}$. In this case the walk resumes from the previous graph in the ladder ($\IG{3}{9}$ in the example). In contrast, a new left ladder is not needed when the walk descends three times from $\IG{8}{2}$ to $\IG{5}{5}$ because the ladders $\llad{5}{6}$ and $\rlad{3}{7}$ that exist at this point can be used to compute these intersections.

In the example in Figure~\ref{fig:window_walk}, the sequence of values $T_1,T_2,T_3,\ldots,T_{14}$ that the algorithm outputs is $1,2,3,4,5,5,6,7,5,6,3,4,5,6$.

\begin{theorem}\label{thm:stable}
\stable can be solved with an amortized cost of $\Theta(1)$ elementary operations per graph received.
\end{theorem}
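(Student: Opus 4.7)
The plan is to match the trivial $\Omega(1)$ per-graph lower bound (each $G_i$ at least has to trigger an output) by bounding the total number of elementary operations performed by Algorithm~\ref{algo:optimal_interval_window} after $\delta$ graphs have been received by $O(\delta)$. My strategy is amortization against the received graphs in the spirit of Theorem~\ref{thm:online} and Theorem~\ref{thm:tstable}, but one has to be more careful because the walk is not monotone: it can oscillate up and down the hierarchy, and \texttt{computeFromRight} may be triggered during descents as well as during rightward moves.

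First I would classify every binary intersection as either (i) an increment of a right ladder (done inside \texttt{computeFromIntersection} or directly during the \texttt{while isConnected} climb), or (ii) an increment of a left ladder (done inside \texttt{computeFromRight}). Lemma~\ref{lem:intersection} guarantees that any other $\IG{k}{i}$ consulted by the walk is the intersection of two already-existing ladder entries, so each additional ``per-step'' graph costs one binary intersection which I charge to the walk step. For the right-ladder contribution, I would show that between two successive invocations of \texttt{computeFromRight} the value of $next$ is constant and only one right ladder, anchored at column $next$, accumulates new entries; since \texttt{computeFromRight} resets $next \gets i+k$, the index $i$ must advance exactly $k$ positions before the next invocation, paying for the at most $k$ increments of that right ladder.

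For the left-ladder contribution, a symmetric accounting works: each call to \texttt{computeFromRight} produces at most $k-1$ intersection graphs spanning the column range $[next-k,\,next-1]$, a range of width $k$ that will be ``consumed'' by the $k$ forthcoming rightward moves of $i$ before the next invocation. Summing, the total left-ladder cost is $O(\delta)$. Finally, connectivity tests are charged either to a walk step (one per ascent or descent, plus one per $G_i$ received) or to a ladder entry (already counted). Since each ascent increases $k$ by $1$, each descent decreases $k$ by $1$, and $k$ stays within $[0,\delta]$, the number of descents is at most the number of ascents plus the maximum excursion of $k$, hence $O(\delta)$; ascents themselves are bounded by $\delta$ because each consumes a fresh graph.

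The hard part will be arguing cleanly that the ``pay for advances'' invariant still holds when a descent itself triggers \texttt{computeFromRight} and when that call aborts upon finding a disconnected ladder entry (the branch setting $k \gets k'-1,\, i \gets i'+1$). In that event the walk effectively short-circuits to a new position and one must verify that the partial left ladder already built is still proportional to the index distance actually travelled, and that the subsequent value of $next$ is set consistently so that later invocations continue to respect the charging scheme. Once this case analysis is settled, the three bounds combine to give $O(\delta)$ elementary operations over $\delta$ receipts, proving the $\Theta(1)$ amortized cost per received graph claimed in Theorem~\ref{thm:stable}. \qed
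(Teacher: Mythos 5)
Your proposal is correct and follows essentially the same amortization as the paper's own (one-paragraph) analysis: charge each ladder increment to a received graph so that the total number of intersections in left and right ladders never exceeds the number of graphs received, and bound the number of descents by the number of ascents, each ascent consuming one fresh graph. If anything you are more careful than the paper, whose proof does not explicitly treat the abort branch of \texttt{computeFromRight} that you flag as the remaining case to verify.
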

\begin{proof}
\review{
The complexity analysis of the algorithm is similar to that of the online algorithm for {\sc \inter}.
The number of intersection graphs in right ladders never exceeds the number of graphs received and the same is true for left ladders. Each intersection graph in a ladder is computed using one binary intersection operation. Each time that the walk climbs in the intersection hierarchy, one connectivity test is performed and a single graph is processed. When the walk descends, no new graph in $\G$ is processed, but the number of descents cannot exceed the number of ascents, and each descent uses at most one connectivity test. This results in a constant amortized cost for each received graph. \qed
}
\end{proof}
\section{Conclusions}

In this paper, we studied the problem of testing whether a given dynamic graph $\G=(G_1,G_2,...,G_{\delta})$ is $T$-interval connected. 
We also considered the related problem of finding the largest $T$ for which a given $\G$ is $T$-interval connected. 
We assumed that the dynamic graph $\G$ is a sequence of {\em independent} graphs and we investigated algorithmic solutions that use two elementary operations, {\em binary intersection} and {\em connectivity testing}, to solve the problems. We developed efficient algorithms that use only $O(\delta)$ elementary operations, asymptotically matching the lower bound of $\Omega(\delta)$. We presented PRAM algorithms that show that both problems can be solved efficiently in parallel, and online algorithms that use $\Theta(1)$ elementary operations per graph received. We also presented dynamic versions of the online algorithms that report connectivity based on recent network history.

In our study, we focused on algorithms using only the two elementary
operations {\em binary intersection} and {\em connectivity testing}.
This approach is suitable for a high-level study of these problems
when the details of changes between successive graphs in a sequence
are arbitrary.
If the evolution of the dynamic graph is constrained in some
ways (e.g., bounded number of changes between graphs), then one
could benefit from the use of more sophisticated data structures to
lower the complexity of the problem. Another natural extension of our
investigation of $T$-interval connectivity would be a similar study for
other classes of dynamic graphs, as identified in \cite{CFQS12}. \review{In particular, we think that the framework presented here could be generalized to testing other properties of subsequences of dynamic graphs.}

Distributed algorithms for all of these problems, in which a node in the
graph only sees its local neighbourhood, would also be of interest. For example, distributed versions of the dynamic algorithms in Section~\ref{sec:dynamic} could be used to supplement the information available to distributed Internet routing protocols such as OSPF (Open-Shortest Path First) which are used to construct routing tables. Our dynamic algorithms have $\Theta(1)$ amortized complexity, and distributed versions with $\Theta(1)$ amortized complexity could provide real-time information about network connectivity to OSPF.


\end{document}